\definecolor{red}{rgb}{1.0,0.0,0.0}
\newtheorem{lemma}{Lemma}
\begin{document}

\title{Spectral Efficient and Energy Aware Clustering in Cellular Networks}

%\author{AAA,~\IEEEmembership{Senior,~IEEE}
\author{Georgios~Kollias,~Ferran~Adelantado,~and~Christos~Verikoukis
%\thanks{Copyright (c) 2015 IEEE. Personal use of this material is permitted. However, permission to use this material for any other purposes must be obtained from the IEEE by sending a request to pubs-permissions@ieee.org.}
%\thanks{G. Kollias is with Iquadrat Informatica, Barcelona, Spain, e-mail: (gkollias@iquadrat.com).}
%\thanks{F. Adelantado is with the Open University of Catalonia (UOC), Barcelona, Spain, e-mail: (ferranadelantado@uoc.edu).}% <-this % stops a space
%\thanks{C.Verikoukis is with the Telecommunications Technological Centre of Catalonia (CTTC), Castelldefels, Spain, e-mail: (cveri @cttc.es).}
}

%\markboth{IEEE Transactions on Vehicular Technology,~Vol.~XX, No.~XX, XXX~2017}
%{}

\maketitle

\begin{abstract}
The current and envisaged increase of cellular traffic poses new challenges to Mobile Network Operators (MNO), who must densify their Radio Access Networks (RAN) while maintaining low Capital Expenditure and Operational Expenditure to ensure long-term sustainability. In this context, this paper analyses optimal clustering solutions based on Device-to-Device (D2D) communications to mitigate partially or completely the need for MNOs to carry out extremely dense RAN deployments. Specifically, a low complexity algorithm that enables the creation of spectral efficient clusters among users from different cells, denoted as enhanced Clustering Optimization for Resources' Efficiency (eCORE) is presented. 
Due to the imbalance between uplink and downlink traffic, a complementary algorithm, known as Clustering algorithm for Load Balancing (CaLB), is also proposed to create non-spectral efficient clusters when they result in a capacity increase. Finally, in order to alleviate the energy overconsumption suffered by cluster heads, the Clustering Energy Efficient algorithm (CEEa) is also designed to manage the trade-off between the capacity enhancement and the early battery drain of some users. Results show that the proposed algorithms increase the network capacity and outperform existing solutions, while, at the same time, CEEa is able to handle the cluster heads energy overconsumption.       
\end{abstract}

\begin{IEEEkeywords}
Cellular Networks, Clustering, Device-to-Device, Traffic Imbalance.
\end{IEEEkeywords}

\IEEEpeerreviewmaketitle

\section{Introduction}
\label{S:intro}

The envisaged increase of the cellular traffic, which according to \cite{CISCO} is expected to reach 30.6 exabytes per month by 2020 at a compound annual growth rate (CAGR) of 53\%, imposes new capacity challenges to the fifth generation (5G) cellular networks. Specifically, this ever-increasing trend in data traffic demand will force 5G networks 
to meet a 1000$\times$ capacity increase, mainly based upon three pillars: the improvement of the spectral efficiency, the allocation of new spectrum bands, and the densification of the Radio Access Network (RAN) \cite{5Gsurvey}. 
Focusing on the densification of the RAN, the research community has proposed the dense deployment of Small Cells (SC) as an enabler 
for the capacity increase required to meet the expected traffic demand. However, such densification of the RAN has posed significant technological challenges, such as interference management \cite{Top_manag1}\cite{Top_manag2}, and economic considerations. The high Capital Expenditure (CAPEX) and Operational Expenditure (OPEX) incurred by the Mobile Network Operators (MNO) when densifying the network hamper the actual deployment of (ultra-)dense RANs \cite{SCs_Cost}.

As mobile devices are the main contributors to the traffic growth, high capacity demand is intrinsically linked to the boost in the number of mobile devices connected to the network. For instance, and based on \cite{CISCO}, the number of mobile devices and connections will globally reach 11.6 billion by 2020. 
Therefore, the need for denser RAN deployments run in parallel with the actual and envisaged growth of the density of mobile devices.
In this context, where the densification of the network is jeopardized by the high deployment costs, we propose the exploitation of the cooperation among mobile devices (through Device-to-Device communications, D2D~\cite{22803}) as a cost-efficient solution to expand the RAN when and where needed. The inclusion of mobile devices as an expansion of the RAN can provide high spatial diversity and improve the spectral efficiency of the whole network.
Although cooperation among Base Stations (BS) has already been proposed as a mean to increase the spectral efficiency (e.g. \cite{Dynamic_clust}), cooperation among devices proposed in the sequel opens up new opportunities and challenges to get the network dynamically adapted to traffic needs.

The rest of the paper is organized as follows: The State of the Art and the contributions of the proposal are detailed in Section \ref{S:sota}. In Section \ref{S:proposal} the system is modelled as an optimization problem, and two clustering algorithms, namely enhanced Clustering Optimization for Resources Efficiency (eCORE) and Clustering algorithm for Load Balancing (CaLB), are presented. Section \ref{S:energy} analyses the energy consumption challenges and proposes a Clustering Energy Efficient algorithm (CEEa) to prevent cluster heads from early battery drain. Finally,  numerical results are presented in Section \ref{S:results}, while concluding remarks are given in Section \ref{S:conclusions}.

\section{State of the Art and Contributions}
\label{S:sota}

The need to improve spectrum utilization, overall throughput and energy consumption in cellular networks has stimulated the research on the D2D field over the last years.  .
In short, D2D communications are expected to become the basis to provide direct connectivity between users (with or without the support of network infrastructure), enable devices to play the role of relay in two-hops communications with the BS, and allow the multicast of common content from the BS to a multicast group via a forwarding cluster head user \cite{SURVEY}.

Regarding the direct connectivity between two users, \textit{Feng et al.} proposed in \cite{DEVICE_TO} a resources' allocation framework to optimize the spectral efficiency of the network when a set of D2D pairs underlying the cellular network operate over the same frequency as the cellular users. In this study, however, D2D pairs are never connected to the cellular network and therefore the D2D pairs have only two options: transmit in D2D mode or remain silent. Similarly, \cite{Clustering4} analyses the joint power control and frequency reuse of D2D pairs in the same scenario presented in \cite{DEVICE_TO}. cellular users.
Also in line with \cite{DEVICE_TO} and \cite{Clustering4},  \textit{J. Huang et al.} proposed in \cite{intercell} a significant step towards more efficient D2D communications by expanding these communications from intra-cell environments to inter-cell environments. The proposal, which is based on game theory, shows clearly the potential of this inter-cell cooperation.
Yet, the scenario is restricted to a very specific use case, characterized by disjoint sets of D2D users and cellular users.

The works in \cite{Clustering1}-\cite{Clustering3} study the performance of D2D communications in multicast groups, where all users download a common content from the BS via a cluster head user. It is shown that a better efficiency in the resources' usage can be achieved in these scenarios, although the gain is always bounded by the lowest quality link between the cluster head and the rest of users of the multicast group. 
In more detail, the authors in \cite{Clustering1} derive expressions in order to select the optimal number of D2D retransmitters in a multicast group, and \cite{Clustering2} proposes a Conventional Multicast Scheme (CMS) to decide whether a user should be served by the BS or by the cluster head.

Similarly, \textit{Meshgi et al.} \cite{multicast_new} maximize the throughput in a single cell scenario with multicast D2D groups underlying a cellular network by proposing a heuristic resource allocation algorithm that achieves near optimal performance. In \cite{Clustering_small} the authors address the multicast clustering by setting up a Primary Cluster Head (PCH) and a Secondary Cluster Head (SCH). In this proposal, the PCH and the SCH are selected based on their residual energy and the received Signal to Interference Noise Ratio (SINR).
Similarly, in \cite{clustering_concept} the authors analyse a set of different strategies for the establishment of multicast clusters. The work shows that D2D-based multicast clustering can increase the system capacity, although it is very sensitive to key parameters, such as clusters' dimension.

Finally, key features required to support network controlled D2D-based multicasting are analysed in \cite{Clustering3}.

Although the works described so far address the problem of clustering in cellular networks with underlying D2D communications, they are limited by the multicast assumptions: i) only downlink traffic is considered and ii) the same content is delivered  to all users in the cluster/multicast group.

Cooperative D2D communications move a step forward in \cite{bench}, where authors formulate the clustering problem as the maximization of the throughput constrained by energy efficiency. The proposed algorithm outperforms the results obtained without clustering but it neglects two important aspects: i) the mobility, that impacts on the quality of the links and on the role played in the cooperation by each user, and ii) the energy consumed by the relay/cluster head when not transmitting, that could be even higher than the energy consumed in transmission state. 

In contrast with the State of the Art, we propose clustering algorithms that are intended to improve the resources' utilization efficiency in a general scenario, where both uplink and downlink traffic are considered in a LTE-A FDD system. The algorithms proposed in the sequel are based on our previous work \cite{globecom}, where the clustering algorithm CORE was proposed. CORE restricted the creation of spectral efficient clusters to users within the same cell thus significantly limiting the achieved gains in dense Heterogeneous Networks (HetNets).  In order to go beyond the aforementioned constraint, we propose a new algorithm, namely enhanced Clustering Optimization for Resources' Efficiency (eCORE), that extends clustering to multi-cell deployments. 
Specifically, eCORE is based on the cooperation among devices by leveraging the D2D communication concept, initially introduced in the framework of LTE-A to support \textit{Proximity-based Services (ProSe)} for public safety \cite{22803}. In the solution proposed hereafter, the mobile devices create spectral efficient clusters with a single cluster head (CH) characterized by good quality links with the serving BS and with the rest of cluster members. In eCORE clusters can be created among users from different cells as long as they result in a decrease of the required resources. The cluster head is responsible for receiving and forwarding packets from/to the BS and the cluster members.  As traffic is more intense in the downlink (DL) and D2D communications are usually carried out over uplink (UL) bands to limit the interference caused to neighbouring users \cite{DEVICE_TO,intercell}, the proposal benefits from the imbalance between uplink and downlink traffic intensity and the high channel gain of D2D communications to increase the capacity of the network. 
Although the dynamic adaptation to the imbalance between uplink and downlink traffic has already been addressed in \cite{TDD2,TDD1} for TDD HetNets, the problem is more challenging in FDD systems, where transferring traffic from the downlink to the uplink is more complex.

Following this rationale, it is shown that the capacity of the network can be further increased by balancing uplink and downlink traffic. In order to benefit from this fact, the Clustering algorithm for Load Balancing (CaLB) is the second proposed algorithm that, run after eCORE, intensifies the clustering process by establishing non-spectral efficient clusters that increase the capacity when downlink and uplink traffics are significantly unbalanced. In particular, CaLB shows that in some cases clustering can be beneficial despite increasing the number of required spectrum resources. Yet, the proposed solutions present challenges in terms of energy consumption of the cluster head that are studied and addressed along the paper by complementing eCORE with the Clustering Energy Efficient Algorithm (CEEa). CEEa is proposed to limit the energy overconsumption experienced by cluster heads and thus minimizing the disincentive in the creation of clusters. Both CaLB and CEEa are algorithms designed to be executed after eCORE to improve its performance (either in terms of capacity or in terms of energy overconsumption), but not to be implemented in a standalone manner.

In a nutshell, the three clustering proposals are designed as a cost-efficient RAN densification solution based on the cooperation of mobile devices in FDD-LTE networks, and the contributions of this work are the following:

\begin{itemize}

\item A new RAN densification solution based on D2D clustering in the framework of FDD LTE-A is presented to improve the spectral efficiency. The algorithm, which is an extension of CORE \cite{globecom} and is denoted by eCORE, exploits the spatial diversity provided by the high density of users and the imbalance between uplink and downlink traffic.  
Contrary to CORE, eCORE enables the creation of clusters among users from different cells. 

\item A load balancing clustering algorithm, namely CaLB, is proposed to increase the capacity of the network. In contrast with eCORE that creates spectral-efficient clusters, CaLB complements eCORE by establishing non-spectral efficient clusters. The capacity gain results from the uplink and downlink load balancing.

\item We propose a complementary algorithm to eCORE, known as CEEa, that compensates the energy overconsumption suffered by cluster heads in eCORE. CEEa benefits from the variation of the scenario caused by mobility and forces reclustering by limiting the time during which users play the role of cluster head to reduce the energy overconsumption.

\end{itemize}

\section{Clustering Proposal}
\label{S:proposal}

The proposed clustering solutions described in the sequel (eCORE, CaLB and CEEa) are all based on a set of premises: i) each cluster has a single cluster head; ii) each user/device can be direcly served by a BS, play the role of cluster head, or join a cluster to be served by a BS through the corresponding cluster head, but no more than a single role can be played simultaneously; iii) intra-cluster communications are D2D transmissions carried out in the uplink band to limit the incurred interference \cite{DEVICE_TO,intercell}. In FDD, the creation of a cluster is translated into a transfer of resources' utilization from the downlink band to the uplink band, which is usually underutilized. For instance, the downlink traffic of a clustered user is first served with downlink resources (from the BS to the cluster head) and subsequently with uplink resources in the D2D communication from the cluster head to the cluster member. If we assume that the channel gain from the BS to the cluster head is higher than the channel gain from the BS to the rest of clustered users (i.e. the cluster head is the user with the best link to the BS among the cluster members), the required downlink resources are reduced.
Although the three algorithms share a set of premises, they differ in their objectives. Thus, in eCORE clustering is aimed to reduce the number of required resources (Section \ref{SS:near_optimal}). In CaLB, the creation of a cluster must decrease the load of the downlink (Section \ref{CaLB}). Finally, in CEEa the energy overconsumption of cluster heads must be compensated (Section \ref{SS:practical_approach}).

This Section is focused on the algorithms that improve the capacity of the network, i.e. eCORE and CaLB. The Section first describes a set of use cases where clustering can be applied (Section \ref{SS:usecases}). Then, the system model used hereafter is stated in Section \ref{SS:model} and the general expressions of the required resources in uplink and downlink are developed in Section \ref{SS:required_resources}. Based on these expressions, the optimal clustering problem aimed to minimize the total number of resources is formalized in Section \ref{SS:optimal}. Finally, due to the complexity of the optimization problem, eCORE is proposed in Section \ref{SS:near_optimal} as a low complexity algorithm and CaLB is introduced in Section \ref{CaLB} to further enhance the capacity.

\subsection{Use cases}
\label{SS:usecases}

The clustering proposal addresses three use cases: the service of users located in coverage gaps, the enhancement of spectral efficiency and the load balancing (between cells and/or bands). Fig. \ref{fig:scenario_use_cases} sketches the initial scenario with 6 UEs served by one of the BSs (Fig. \ref{fig:baseline_scenario}) and the following cases:%The three use cases are sketched in Fig. \ref{fig:scenario_use_cases}, where 6 UEs are initially served by one of the two deployed BSs (Fig. \ref{fig:baseline_scenario}). The clustering should cope with the following scenarios:
%connected to two BSs shows an explicative scenario composed of 5 UEs and 2 BSs. In Fig. \ref{fig:baseline_scenario} no clustering solutions are applied and each user  is served by a single BS according to the received SINR. The three use cases stated above can be exemplified as follows:
\begin{itemize}
%\item Use case 1 (Fig.\ref{fig:coverage_gap_scenario}): \textcolor{red}{Extension of the coverage area.} Let us assume that UE 5 is located in a coverage gap. If the quality of the links UE 5-UE 4 and UE 4-BS 2 is good enough, the clustering of UE 4 (cluster head) and UE 5 can guarantee the service of the latter. 
\item Extension of the coverage (Fig.\ref{fig:coverage_gap_scenario}): Assume that UE5 is in a coverage gap. If the quality of the links UE5-UE4 and UE4-BS2 is good enough, the clustering of UE4 (cluster head) and UE 5 can guarantee the service of the latter.
%\item Use case 2 (Fig.\ref{fig:spectrum_efficient_scenario}): The spectrum efficiency and the capacity of the network can be enhanced by clustering UE 5 and UE 3 with UE 4 (the cluster head) if: i) the quality of connections UE 3- UE 4, UE 5-UE 4 and UE 4-BS 2 are significantly better than the quality of connections UE 5-BS 2 and UE 3-BS 2; %ii) UEs 3, 4 and 5 download the same content; iii) downlink is highly loaded whereas uplink is significantly less loaded.
%ii) downlink is highly loaded whereas uplink is significantly less loaded.
\item Spectral efficiency enhancement (Fig.\ref{fig:spectrum_efficient_scenario}): Clustering UE5 and UE3 with UE4 (the cluster head) increases spectral efficiency if: i) the quality of links UE3-UE4, UE5-UE4 and UE4-BS2 is significantly better than the quality of links UE5-BS2 and UE3-BS2; ii) downlink is highly loaded while uplink is less loaded.
%\item Use case 3 (Fig.\ref{fig:load_balanced_scenario}): If BS 2 is highly loaded and BS 1 is less loaded, the clustering of UE 3 with UE 2 (cluster head) can balance the load of BS 2 to BS 1. 
\item Load balancing (Fig.\ref{fig:load_balanced_scenario}): If BS2 is highly loaded and BS1 is less loaded, the clustering of UE3 with UE2 (cluster head) can balance the load of BS2 to BS1. 
\end{itemize}

\begin{figure}[!ht]
    \centering
    \subfigure[Without clustering ]{\includegraphics[width=.4\linewidth]{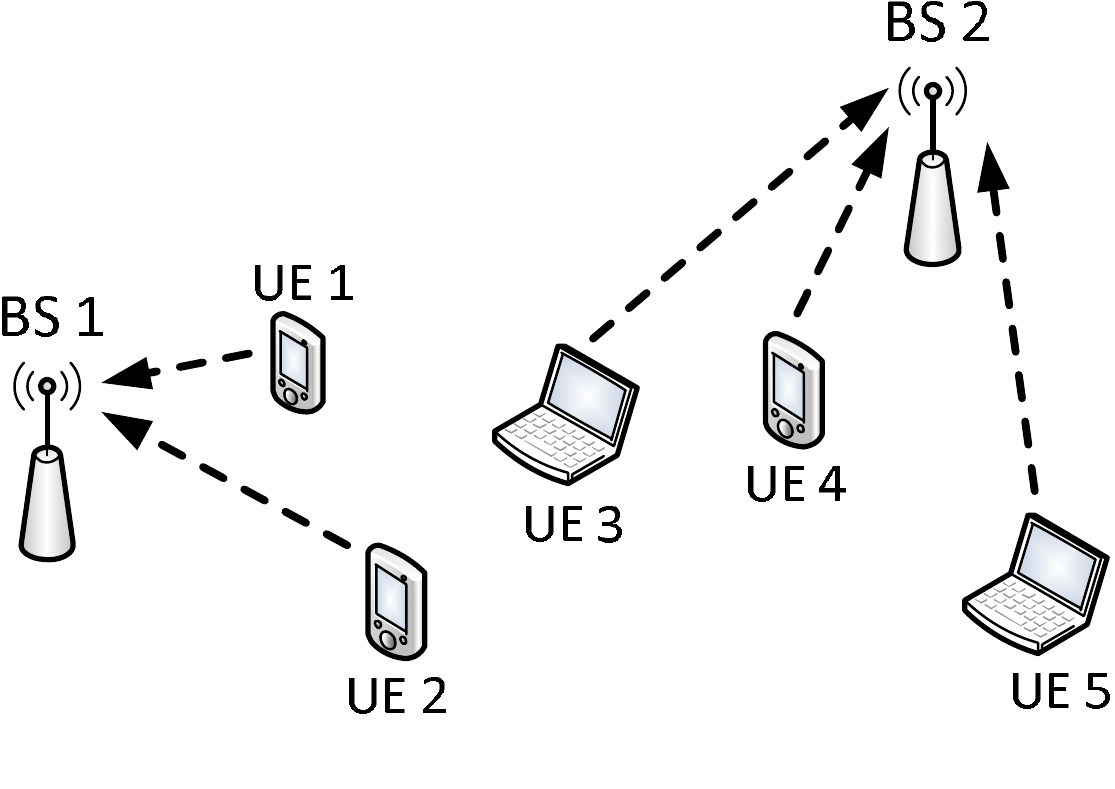} 	\label{fig:baseline_scenario}  } 
    \subfigure[Extension of the coverage ]{\includegraphics[width=.4\linewidth]{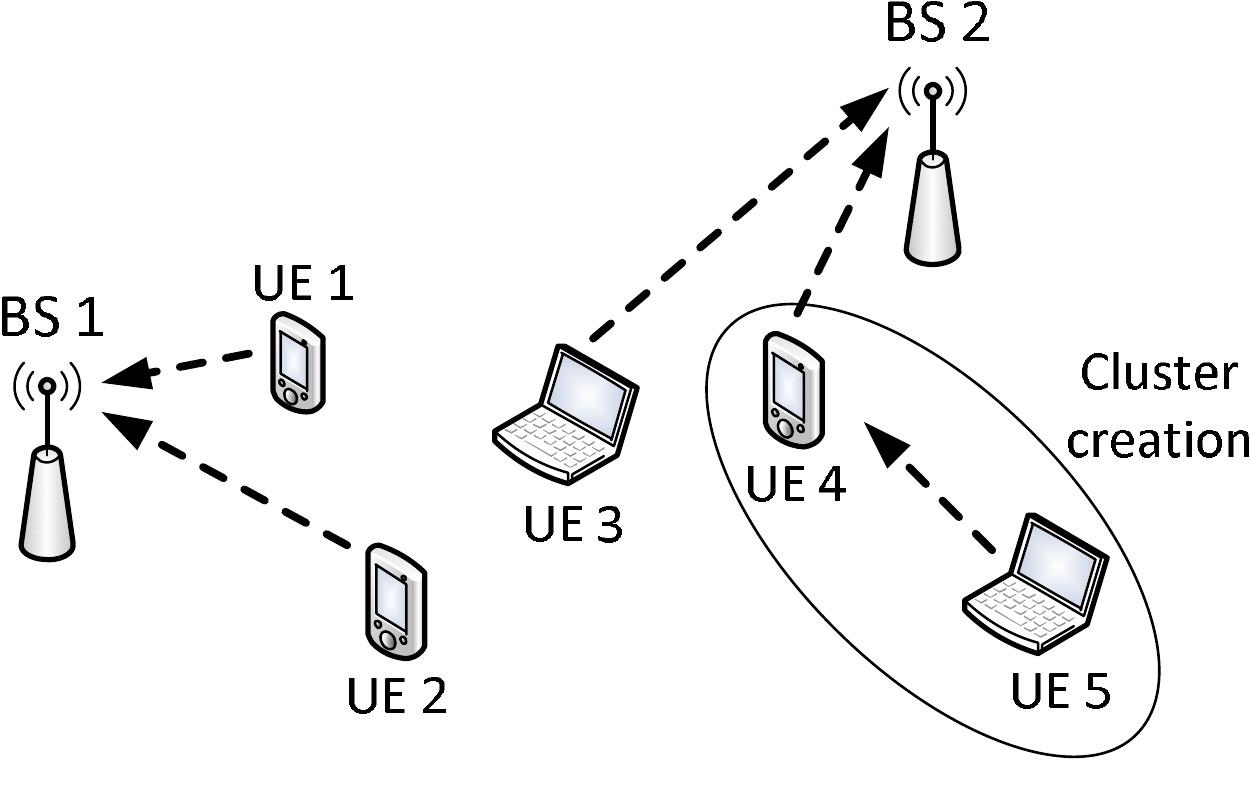} \label{fig:coverage_gap_scenario} } \\
    \subfigure[Spectral efficiency increase]{\includegraphics[width=.4\linewidth]{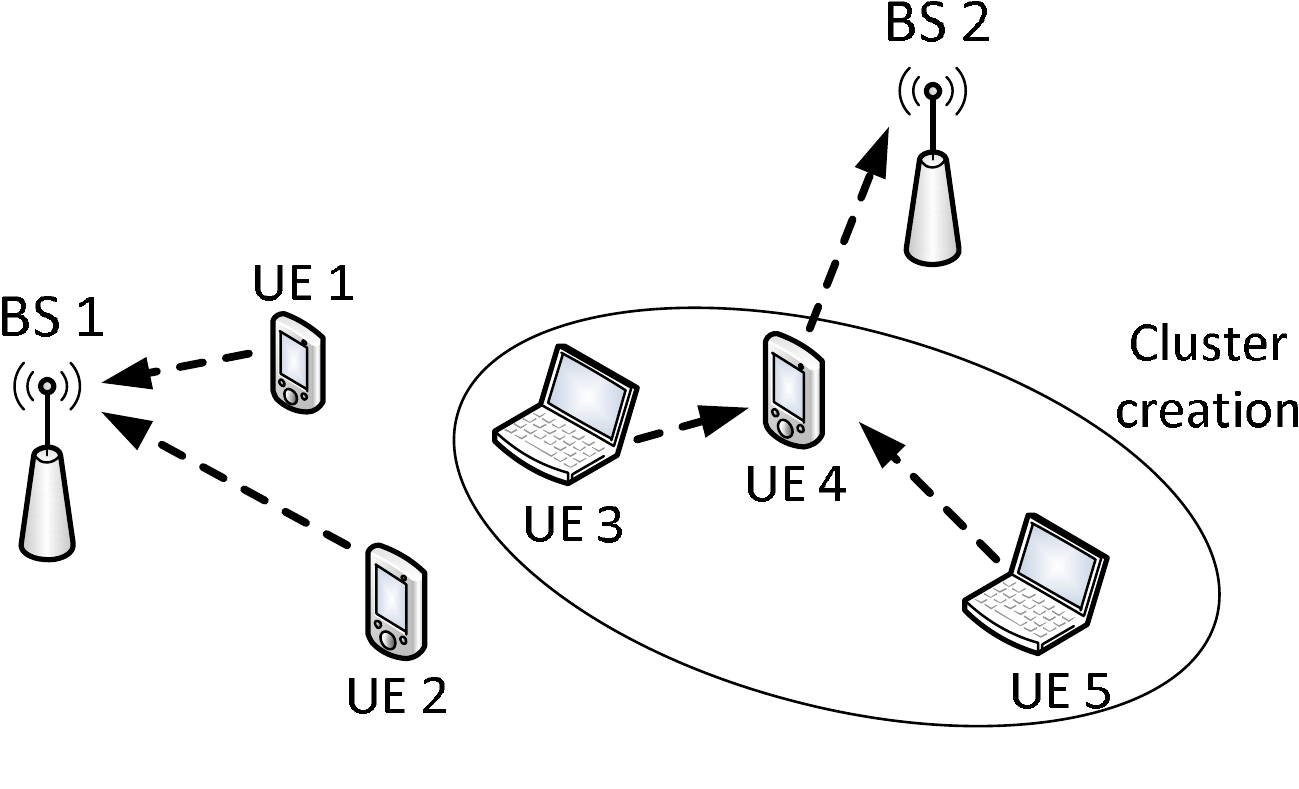} 	\label{fig:spectrum_efficient_scenario}  } 
    \subfigure[Load balancing]{\includegraphics[width=.4\linewidth]{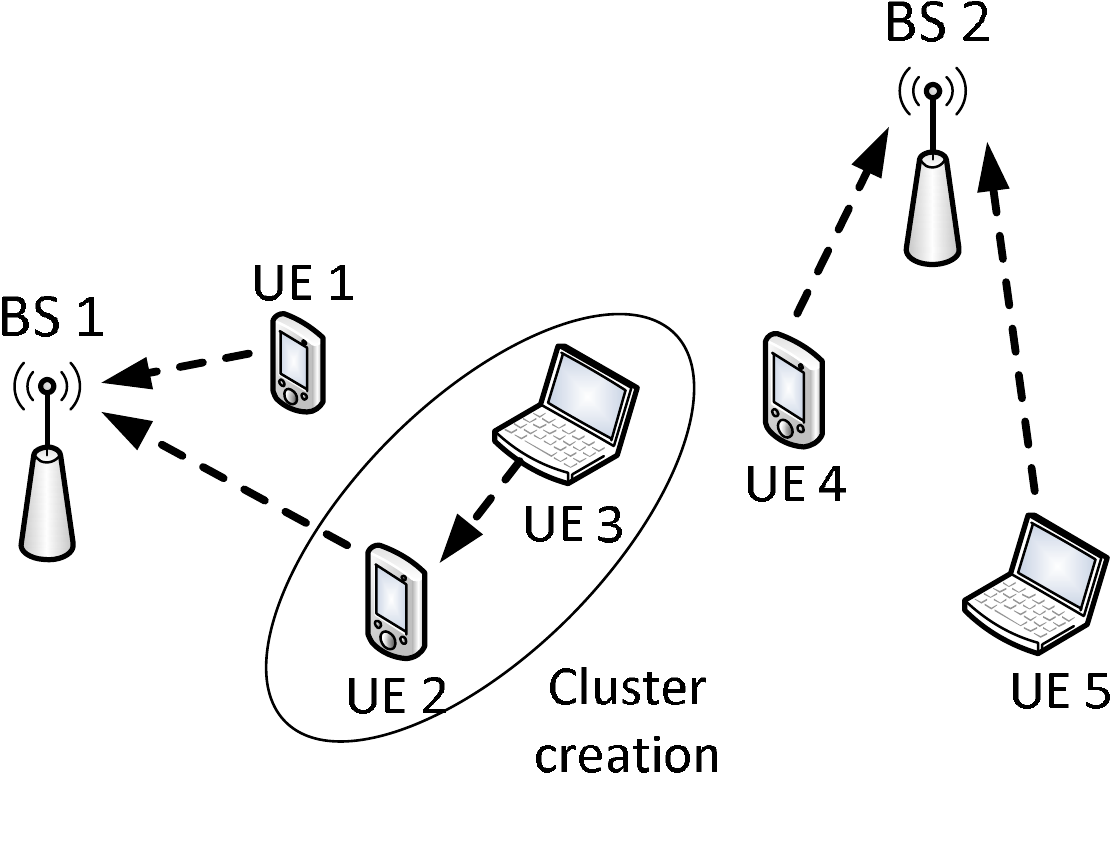} \label{fig:load_balanced_scenario} }
\caption{Example of possible clustering use cases}
\label{fig:scenario_use_cases}
\end{figure}

\subsection{System Model}
\label{SS:model}

The network is composed of a set of FDD-LTE BSs (macro eNBs and/or SCs), namely $\mathcal{B}$, covering the scenario and serving a set of users, denoted by $\mathcal{U}$. Each user $i \in \mathcal{U}$ is connected to a BS $k \in \mathcal{B}$ according to any of the existing cell association algorithms, such as the algorithms based on Reference Signal Received Power (RSRP) with or without Cell Range Expansion or on the Reference Signal Received Quality (RSRQ). The set of users connected to BS $k$ is referred to as $\mathcal{U}_k$. As users are assumed not to be served by more than one BS simultaneously, $\mathcal{U} = \bigcup_{k \in \mathcal{B}} \mathcal{U}_k$  and $\bigcap_{k \in \mathcal{B}} \mathcal{U}_k = \emptyset$. Each user $i \in \mathcal{U}$ is characterized by its traffic profile $\pi_i = (R_{i}^d, R_{i}^u )$, composed of the average transmission rate in the downlink $R_{i}^d$ and in the uplink $R_{i}^u$. 

As in general uplink and downlink traffic are unbalanced, $R_i^u = \alpha_i R_i^ d$, with $0 \leq \alpha_i \leq 1$. In LTE-A the transmission rate between two nodes depends on the selected Modulation and Coding Scheme (MCS), which is determined by the maximum allowed bit error rate (BER) and the SINR. Accordingly, the number of bits transmitted by user $i$ during a subframe time $T^{s}= 1$ms, defined as Transport Block Size (TBS), can be approximated by an attenuated and truncated form of Shannon bound. Thus, the TBS of  a transmission from $i$ to $j$ in the band $v$ ($v=u$ if the transmission is in the uplink band and $v=d$ if it is in the downlink band) may be approximatted as

\begin{equation}
\eta_{i,j}^v = T^{s}  r W \log_{2} (1+ \gamma_{i,j}^v)
\end{equation}  
\noindent where $r$ is the attenuating factor, $W$ is the bandwidth of a Physical Resource Block (PRB) and $\gamma_{i,j}^v$ is the SINR received at $j$ when data is transmitted by $i$. If the transmitter is a UE and the receiver is a BS,  $i \in \mathcal{U}$ and $j \in \mathcal{B}$; if the transmitter is a BS and the receiver is a UE,  $i \in \mathcal{B}$ and $j \in \mathcal{U}$; finally, if both transmitter and receiver are users in D2D mode, $i,j \in \mathcal{U}$.

\subsection{Resources required with and without clustering}
\label{SS:required_resources}

The spectral efficiency is measured in bps/Hz. Therefore, the enhancement of the spectral efficiency is equivalent to the minimization of the PRBs required to serve a given traffic. Based on the definitions stated above, the expected number of PRBs required in the scenario to serve all the users in the uplink ($N^u$) and in the downlink ($N^d$)  can be expressed as
\begin{equation}\label{Nd_no_cluster}
N^d = \sum_{k \in \mathcal{B}} N_k^d = \sum_{k \in \mathcal{B}} \sum_{i \in \mathcal{U}_k} \frac{R_i^d  T^s}{\eta_{k,i}^d} =  \sum_{k \in \mathcal{B}} \sum_{i \in \mathcal{U}_k} R_i^d  \phi_{k,i}^d
\end{equation}

%\noindent and

\begin{equation}\label{Nu_no_cluster}
N^u = \sum_{k \in \mathcal{B}} N_k^u = \sum_{k \in \mathcal{B}} \sum_{i \in \mathcal{U}_k} \frac{R_i^u  T^s}{\eta_{i,k}^u} =  \sum_{k \in \mathcal{B}} \sum_{i \in \mathcal{U}_k} \alpha_i R_i^d  \phi_{i,k}^u
\end{equation}

\noindent where $N_k^d$ and $N_k^u$ are the expected number of PRBs per subframe required by base station $k$ (eNB or SC) in downlink and uplink. For simplicity, we define $\phi_{k,i}^d = \frac{ T^s}{\eta_{k,i}^d}$ and $\phi_{i,k}^u = \frac{T^s}{\eta_{i,k}^u} $.  

%\subsection{Resources required with clustering}
%\label{SS:clustering}

Let us consider that groups of users can create clusters. Each cluster $u$ is composed of cluster member users, among which a single user plays the role of cluster head. Hereafter, the set of users in cluster $u$ will be denoted by $\mathcal{C}_u$, and the cluster head by $h_u \in \mathcal{C}_u$. The cluster head $h_u$ is responsible for receiving the downlink traffic of all cluster members from the BS and forward it to the corresponding cluster member. Likewise, for the uplink traffic, the cluster head receives the traffic from the rest of the cluster members and forwards it to the BS. We will denote the set of all the clusters in the scenario by $\mathcal{C} = \bigcup_{u} \mathcal{C}_u$. Note that the communication within the cluster is carried out over the uplink band to minimize the  interference caused to the users outside the cluster. Therefore, intra-cluster communications are always carried out in the uplink band, whereas communications from/to the cluster head to/from the BS are both in the uplink band and in the downlink band. In real FDD networks, BSs are always full-duplex; conversely, user devices can be half-duplex (known as Half-Duplex FDD devices) or full-duplex (known as Full-Duplex FDD devices)\footnote{The term \textit{full-duplex} is defined as the ability of a node to transmit and receive simultaneously over uplink and downlink. The ability to transmit and receive simultaneously over the same band is not considered in this work.}. We also define the set of cluster heads as $\mathcal{H} = \{ h_u \}_{\forall u}$, and the set of cluster heads connected to BS $k$ as $\mathcal{H}_k = \mathcal{H} \cap \mathcal{U}_k$. Based on these definitions, the expected number of PRBs required in the downlink band ($\tilde{N}^d$) and in the uplink band ($\tilde{N}^u$) when clusters exist  are written as
\begin{equation} \label{Nd}
\tilde{N}^d = \sum_{k \in \mathcal{B}}\tilde{N}_k^d =\underbrace{\sum_{k \in \mathcal{B}} \sum_{i \in \mathcal{U}_k \setminus \mathcal{C}} R_i^d \phi_{k,i}^d}_{\text{non-clustered users}} + \underbrace{\sum_{k \in \mathcal{B}} \sum_{h_u \in \mathcal{H}_k} \phi_{k, h_u}^d \sum_{i \in \mathcal{C}_u} R_i^d}_{\text{clustered users}}
\end{equation} 

\begin{eqnarray} \label{Nu}
\tilde{N}^u &=& \sum_{k \in \mathcal{B}}\tilde{N}_k^u = \underbrace{ \sum_{\mathcal{C}_u \subseteq \mathcal{C}} \sum_{i \in \mathcal{C}_u \setminus \{ h_u \}} \left(  \phi_{i, h_u}^u R_i^u + \phi_{h_u ,i}^u R_i^d \right)}_{\text{transmissions within the cluster}} \nonumber \\
&+& \underbrace{\sum_{k \in \mathcal{B}} \sum_{i \in \mathcal{U}_k \setminus \mathcal{C}} R_i^u \phi_{i,k}^u}_{\text{non-clustered users}} + \underbrace{\sum_{k \in \mathcal{B}} \sum_{h_u \in \mathcal{H}_k} \phi_{ h_u,k}^u \sum_{i \in \mathcal{C}_u} R_i^u}_{\text{transmissions Cluster heads} \rightarrow \text{BSs}}  
\end{eqnarray}
\noindent where $\tilde{N}_k^d$ and $\tilde{N}_k^u$ denote the expected number of PRBs required by base station $k$ in the downlink and uplink, respectively, when clustering is applied. As it can be observed in \eqref{Nu}, intra-cluster communications do not interfere with uplink communications from the cluster head to the BS (they are not simultaneous). Likewise, it is worth noting that the number of PRBs required in the scenario is a function of the SINR, which in turn depends on the cell association algorithm. However, \eqref{Nd_no_cluster}-\eqref{Nu} are valid for a given SINR level and regardless of the cell association algorithm.

\subsection{Optimal clustering for spectral efficiency}
\label{SS:optimal}
The aim of the clustering technique presented herein is the minimization of the spectral resources utilization,  i.e. $\tilde{N} = \tilde{N}^u + \tilde{N}^d$.

As it can be observed, the minimization of the required resources is basically an association problem, where a user must be associated to a BS directly or through a cluster head. Let us define the association matrix $\mathbf{X} \in \{  0, 1 \}^{|\mathcal{U}| \times |\mathcal{B}|}$, where $| \cdot |$ is the cardinality  operator of a set, and the elements of $\mathbf{X}$ are $x_{i,k}=1$ if user $i$ is directly served by BS $k$ and  $x_{i,k} =0$ otherwise. Similarly, we define $\mathbf{Y} \in \{  0,1 \}^{|\mathcal{U}| \times |\mathcal{U}|}$ as the intra-cluster association matrix, with the elements of $\mathbf{Y}$ such that  $y_{j,i} = 1$ if user $j$ is connected to a BS through user $i$ (with $i$ playing the role of cluster head) and $y_{j,i} = 0$  otherwise. Using matrices $\mathbf{X}$ and $\mathbf{Y}$, the total number of required resources can be expressed as,
\begin{eqnarray} \label{N_matrix}
\tilde{N} (\mathbf{X},\mathbf{Y})  &=&  \sum_{i \in \mathcal{U}} \sum_{k \in \mathcal{B}} \Big[   x_{i,k} R_i^d \left(  \phi_{k,i}^d + \alpha_i \phi_{i,k}^u \right)   \\
&+&     \sum_{j \in \mathcal{U} \setminus \{ i \} }  y_{j,i} R_j^d \left(  \phi_{k,i}^d + \alpha_j \phi_{i,k}^u + \phi_{i,j}^u + \alpha_j \phi_{j,i}^u  \right) \Big]   \nonumber 
\end{eqnarray}

Therefore, the optimization problem is formulated as 
\begin{subequations} \label{opt_problem}
\begin{alignat}{2}
\min_{\mathbf{X},\mathbf{Y}} & \quad \tilde{N} (\mathbf{X},\mathbf{Y}) &   \tag{\ref{opt_problem}} \\
\text{s.t.} & \quad  x_{i,k},y_{i,j} \in \{ 0,1 \},& \quad \forall i,j \in \mathcal{U}, \forall k \in \mathcal{B} \label{constrain_a}\\
& \quad  \sum_{k \in \mathcal{B} } x_{i,k} + \sum_{j \in \mathcal{U}  } y_{i,j}=1,& \quad \forall i \in \mathcal{U} \label{constrain_c}\\
& \quad  \sum_{k \in \mathcal{B} }\sum_{i \in \mathcal{U}} x_{i,k} \geq 1, \label{constrain_d}\\
& \quad \sum_{k \in \mathcal{B}} x_{i,k}-y_{j,i} \geq 0, & \quad \forall i,j \in \mathcal{U} \label{constrain_e} \\
& \quad   y_{i,j} + y_{j,i} \leq 1,& \quad \forall i,j \in \mathcal{U} \label{constrain_f}\\
& \quad y_{i,i}=0, & \quad \forall i \in \mathcal{U} \label{constrain_g}
\end{alignat}
\end{subequations}

The optimization problem stated in \eqref{opt_problem} is an integer (binary) linear programming problem (ILP) \eqref{constrain_a}, where UEs can be served by either a BS or a cluster head \eqref{constrain_c} and at least one UE should be connected to a BS \eqref{constrain_d}. Moreover, a cluster can only be created if the cluster head is directly connected to a BS \eqref{constrain_e}, since multi-hops are not allowed within the cluster. 
A clustered user can either be a cluster head or be associated to a cluster head \eqref{constrain_f}. By definition, $y_{i,i}=0$ \eqref{constrain_g}.

\subsection{Enhanced Clustering Optimization for Resources Efficiency (eCORE)}
\label{SS:near_optimal}

As all 0-1 ILP problems are NP-hard \cite{ILP}, \eqref{opt_problem} is NP-hard. In order to overcome the complexity, a low complexity algorithm $(O(n^3))$, namely enhanced Clustering Optimization for Resources' Efficiency (eCORE), is presented. Based on the expressions derived in Section \ref{SS:required_resources}, some results can be enunciated.

\begin{lemma}\label{clustering_principle}
The number of resources required to serve a user $i \in \mathcal{U}_{k}$ is reduced when it joins a cluster with cluster head $j \in \mathcal{U}_{q}$ iff $( \phi_{k,i}^d - \phi_{q,j}^d - \phi_{j,i}^u  ) + \alpha_i ( \phi_{i,k}^u - \phi_{j,q}^u - \phi_{i,j}^u )>0$.
\end{lemma}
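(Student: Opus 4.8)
The plan is to compare, term by term, the contribution of user $i$ to the total resource count $\tilde{N}=\tilde{N}^u+\tilde{N}^d$ before and after $i$ joins the cluster headed by $j$, and to observe that every summand in \eqref{Nd}--\eqref{Nu} that does not involve $i$ is left untouched by this operation; the iff condition then drops out of a one‑line subtraction.

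First I would isolate the ``no‑clustering'' cost of $i$. Since $i\in\mathcal{U}_{k}$ is served directly by BS $k$, equations \eqref{Nd_no_cluster} and \eqref{Nu_no_cluster} show that $i$ contributes $R_i^d\phi_{k,i}^d$ to $N^d$ and $\alpha_i R_i^d\phi_{i,k}^u$ to $N^u$, i.e. a total of $R_i^d\bigl(\phi_{k,i}^d+\alpha_i\phi_{i,k}^u\bigr)$.

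Next I would read off the cost of $i$ once it is a \emph{member} (not the head) of the cluster headed by $j\in\mathcal{U}_{q}$. From \eqref{Nd}, $i$'s downlink traffic is now delivered by BS $q$ to the head $j$, contributing $R_i^d\phi_{q,j}^d$ to $\tilde{N}^d$. From \eqref{Nu}, three uplink‑band terms appear that are attributable to $i$: the intra‑cluster forwarding of $i$'s downlink traffic from $j$ to $i$, namely $R_i^d\phi_{j,i}^u$; the intra‑cluster transport of $i$'s uplink traffic from $i$ to $j$, namely $\alpha_i R_i^d\phi_{i,j}^u$; and the head‑to‑BS transport of $i$'s uplink traffic from $j$ to $q$, namely $\alpha_i R_i^d\phi_{j,q}^u$. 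Hence after clustering $i$ contributes $R_i^d\bigl(\phi_{q,j}^d+\phi_{j,i}^u+\alpha_i\phi_{i,j}^u+\alpha_i\phi_{j,q}^u\bigr)$ (which is exactly the corresponding bracket in \eqref{N_matrix}). The step that needs care here is checking that the remaining summands of \eqref{Nd}--\eqref{Nu} are unchanged: by constraint \eqref{constrain_e} the head $j$ is already directly attached to $q$, and its own contribution has the same form ($R_j^d\phi_{q,j}^d$ in the downlink, $\alpha_j R_j^d\phi_{j,q}^u$ in the uplink) whether or not $j$ carries additional members, so no bookkeeping for $j$ is altered, and all other users keep their terms verbatim.

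Finally, the reduction in required resources when $i$ joins the cluster is the difference of the two costs,
\[
\Delta \;=\; R_i^d\Big[\bigl(\phi_{k,i}^d-\phi_{q,j}^d-\phi_{j,i}^u\bigr)+\alpha_i\bigl(\phi_{i,k}^u-\phi_{j,q}^u-\phi_{i,j}^u\bigr)\Big],
\]
and since $R_i^d>0$ we get $\Delta>0$ iff the bracket is positive, which is precisely the claimed inequality. The only genuinely delicate point — more a matter of careful accounting than of real difficulty — is the verification that joining the cluster perturbs the global objective only through the terms assigned to $i$; once that is granted, the lemma is immediate.
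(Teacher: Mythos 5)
Your proposal is correct and follows essentially the same route as the paper, which simply computes the difference between the per-user PRB contributions in \eqref{Nd_no_cluster}--\eqref{Nu_no_cluster} and those in \eqref{Nd}--\eqref{Nu}; your term-by-term accounting (including the check that the head $j$'s own terms and all other users' terms are unaffected) is just a more explicit version of that subtraction. The resulting expression $\Delta = R_i^d\big[(\phi_{k,i}^d-\phi_{q,j}^d-\phi_{j,i}^u)+\alpha_i(\phi_{i,k}^u-\phi_{j,q}^u-\phi_{i,j}^u)\big]$ matches the clustering gain $G_{i,j}$ of Lemma \ref{cluster_head_selection}, confirming the iff condition.
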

\begin{proof}
Lemma \ref{clustering_principle} is calculated from the difference between PRBs required in \eqref{Nd_no_cluster}-\eqref{Nu_no_cluster} and PRBs required in \eqref{Nd}-\eqref{Nu}. 
\end{proof}

\begin{lemma} \label{cluster_head_selection}
Given two users $i \in \mathcal{U}_k$ and $j \in \mathcal{U}_q$, the clustering gain $G_{i,j}$ when $j$ is the cluster head is defined as,
\begin{equation}\label{gain_definition}
G_{i,j} = R_i^d ( \phi_{k,i}^d - \phi_{q,j}^d - \phi_{j,i}^u  ) + \alpha_i R_i^d  ( \phi_{i,k}^u - \phi_{j,q}^u - \phi_{i,j}^u ) 
\end{equation}
The set of possible cluster heads of user $i$ is defined as $\mathcal{Y}_i = \{  j : G_{i,j}>0  \}$. For two users $i \in \mathcal{U}_k$ and $j \in \mathcal{U}_q$, if $\mathcal{Y}_i =\{ j \}$ and $\mathcal{Y}_j = \emptyset$, then $i$ and $j$ will create a cluster in which $j$ is the cluster head. Conversely, if $\mathcal{Y}_j \neq \emptyset$, $j \in \mathcal{Y}_i $ and $|\mathcal{Y}_i| >1 $, $i$ and $j$ will create a cluster where $j$ plays the role of cluster head if $G_{i,j} >G_{j,n} + G_{i,t}$ for $\forall n \in \mathcal{Y}_j$ and $\forall t \in \mathcal{Y}_i$. 
\end{lemma}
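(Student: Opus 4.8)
The plan is to read $G_{i,j}$, through Lemma~\ref{clustering_principle} and the cost expression \eqref{N_matrix}, as exactly the amount by which the total required resources $\tilde N=\tilde N^u+\tilde N^d$ decrease when user $i$ is taken out of the set of directly served users and attached as a member to a cluster whose head is $j$ (the head $j$ itself staying directly connected to its BS, so that the terms $\phi_{q,j}^d$, $\phi_{j,q}^u$ in \eqref{gain_definition} are charged once to $i$'s traffic only). The structural fact I would establish first is that the clustered part of \eqref{N_matrix} is \emph{separable over cluster members}: the contribution of a clustered user $\ell$ depends only on $\ell$, on the identity of its head, and on that head's serving BS, and not on who the other members are. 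Hence, for any feasible $(\mathbf X,\mathbf Y)$, the reduction of $\tilde N$ relative to serving everyone directly equals $\sum G_{\ell,h(\ell)}$ over all member–head pairs, and minimizing $\tilde N$ is equivalent to maximizing this sum subject to the role constraints \eqref{constrain_e}--\eqref{constrain_f} (a head cannot be a member and conversely). Both claims are then statements about how an optimal role assignment treats the pair $i,j$, to be proved by an exchange argument.

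For the first statement, assume $\mathcal Y_i=\{j\}$ and $\mathcal Y_j=\emptyset$. Since no $n$ has $G_{j,n}>0$, turning $j$ into a cluster member can never strictly increase the total gain, so there is an optimal configuration in which $j$ is either directly served or a cluster head; in either case $j$ is an admissible head for $i$ by \eqref{constrain_e}. Since $\mathcal Y_i=\{j\}$, attaching $i$ to any head other than $j$ contributes $G_{i,t}\le 0$, leaving $i$ directly served contributes $0$, while attaching $i$ to $j$ contributes $G_{i,j}>0$; by separability this change affects no other user's contribution, so an optimal configuration must place $i$ in $j$'s cluster, which is the asserted outcome.

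For the second statement, take $\mathcal Y_j\neq\emptyset$, $j\in\mathcal Y_i$, $|\mathcal Y_i|>1$, and $G_{i,j}>G_{j,n}+G_{i,t}$ for all $n\in\mathcal Y_j$ and all alternative heads $t\in\mathcal Y_i\setminus\{j\}$ (the case $t=j$ is vacuous since $G_{j,n}>0$ would make the inequality impossible). I would compare the configuration $\mathcal A$ in which $i$ is a member of $j$'s cluster against an arbitrary competitor $\mathcal B$, splitting on the fate of $j$. If $j$ is still a head in $\mathcal B$, then $i$ is either directly served or a member of some $t$'s cluster; moving $i$ onto $j$ changes the total gain by $G_{i,j}-0>0$ or by $G_{i,j}-G_{i,t}>0$ (positive because $G_{i,j}>G_{j,n}+G_{i,t}\ge G_{i,t}$ for any $n\in\mathcal Y_j$), so $\mathcal A$ is no worse. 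If instead $j$ is a member in $\mathcal B$, its head $n$ may be assumed to satisfy $n\in\mathcal Y_j$ (otherwise freeing $j$ does not hurt), $j$ contributes $G_{j,n}$, and $i$ — now barred from $j$ — contributes at most $\max_{t\in\mathcal Y_i\setminus\{j\}}G_{i,t}$ (or $0$); thus the combined $i,j$ contribution in $\mathcal B$ is at most $G_{j,n}+G_{i,t}<G_{i,j}$, the value it has in $\mathcal A$. In every case $\mathcal A$ dominates, so an optimal role assignment places $i$ under cluster head $j$, which is the claim.

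The step I expect to be the main obstacle is the bookkeeping when, in the competitor $\mathcal B$, the user $i$ or the user $j$ is itself the head of \emph{other} members: reassigning $i$ or $j$ then forces those members to be re-homed, and one must argue this never causes a net loss — typically by leaving them directly served at zero marginal cost, or by invoking the separability above to show the rearrangement is harmless. Nailing this down, together with fixing precisely the range of the quantifier over $t$ so that the stated inequality is the sharp threshold, is where the argument needs care; the remaining sign-chasing is already packaged in Lemma~\ref{clustering_principle}.
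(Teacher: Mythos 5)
Your core decomposition is exactly the one the paper relies on: the clustered part of \eqref{N_matrix} is separable over cluster members, so the reduction of $\tilde{N}$ relative to all-direct service equals $\sum_{\ell} G_{\ell,h(\ell)}$ over member--head pairs, and the lemma reduces to comparing sums of pairwise gains. The paper's own proof consists of precisely this observation (``the clustering gain achieved by a cluster equals the aggregation of clustering gains of all cluster members'') and nothing more; you have supplied the exchange argument it omits, and you are right that the quantifier ``$\forall t \in \mathcal{Y}_i$'' must in effect exclude $t=j$, since otherwise the inequality $G_{i,j}>G_{j,n}+G_{i,t}$ is unsatisfiable whenever $\mathcal{Y}_j\neq\emptyset$.

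The obstacle you flag at the end --- re-homing the other members of a cluster headed by $i$ or $j$ when their roles change --- is a genuine gap if the lemma is read as a statement about the global optimum of \eqref{opt_problem}: if $i$ is the only profitable head for several users whose aggregate gain exceeds $G_{i,j}$, demoting $i$ to a member of $j$'s cluster is strictly suboptimal even though $\mathcal{Y}_i=\{j\}$ and $\mathcal{Y}_j=\emptyset$, so the first claim can fail under that reading, and your dominance argument against an arbitrary competitor $\mathcal{B}$ cannot be closed. The paper does not resolve this either; the lemma is in fact used only as the pairwise decision rule driving eCORE (Algorithm \ref{alg:the_alg}), where the comparison is confined to the roles of $i$ and $j$ and the single best alternatives $t$ and $n$, and under that local reading your case analysis is already complete and no re-homing arises. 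So your argument is sound for the statement as the paper actually uses it; the fix is to make the local (pairwise) scope of the claim explicit rather than to prove optimality over all feasible configurations.
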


\begin{proof}
Using Lemma \ref{clustering_principle}, the clustering gain achieved by a cluster equals the aggregation of clustering gains of all cluster members. Thus, Lemma \ref{cluster_head_selection} can be derived from \eqref{Nd_no_cluster}-\eqref{Nu}. 
\end{proof}
According to Lemmas \ref{clustering_principle} and \ref{cluster_head_selection}, clustering is not limited to users within the same cell. Therefore, a cluster may be created by users previously connected to different cells (eNBs and/or SCs).

The proposed eCORE, described in Algorithm \ref{alg:the_alg}, is based on Lemmas \ref{clustering_principle} and \ref{cluster_head_selection} and it is aimed to create clusters that improve the total spectral efficiency. Therefore, the key parameter of the algorithm is the clustering gain ($G_{i,j}$) defined in Lemma \ref{clustering_principle}. eCORE starts with the computation of clustering gains for the different UEs, and initializing for each user $i$ the set $\mathcal{Y}_i$  of users $j$ that would result in a positive clustering gain, i.e. $G_{i,j}>0$ (line \ref{line1}). As stated in previous Sections, eCORE only considers single-hop intra-cluster communications to limit complexity and signalling. Accordingly, the term \textit{conflict} is used in the sequel to describe situations where a user $i$ has a positive clustering gain with a user $j$ ($G_{i,j}>0$) that, in turn, has a positive clustering gain with a third user $n$ ($G_{j,n}>0$). In these conflicting situations, either user $j$ becomes the cluster head of user $i$ or user $n$ becomes the cluster head of user $j$, but not both of them. Both situations are enunciated in Lemma \ref{cluster_head_selection} and implemented in Algorithm \ref{alg:the_alg}. Initially, eCORE clusters users without \textit{conflicts} to achieve the maximum clustering gain (line \ref{line3} to line \ref{line19}). In the second part, eCORE resolves the unsolved \textit{conflicts}, stored in the set $\mathcal{A}$ (see Algorithm \ref{alg:the_alg}), by selecting the option that provides the highest clustering gain (from line \ref{line21} to line \ref{line34}). %Finally, Lemma \ref{load_balance_cell} is the basis to balance the load among cells when clusters are established over different cells (line \ref{line35}).
\begin{algorithm}[h!]
\DontPrintSemicolon
\SetAlgoLined
\caption{Enhanced Clustering Optimization for Resources Efficiency (eCORE) \big(\textit{$O(n^3)$}\big)} \label{alg:the_alg}
\KwData{$\mathcal{U}$, $\phi_{k,i}^d$, $\phi_{i,k}^u$, $\phi_{i,j}^u$, $\phi_{j,i}^u$}
\KwResult{Set of Clusters $\mathcal{C} = \bigcap_u \mathcal{C}_u$}
Initialize the set of possible CHs ($\mathcal{Y}_i$), $\forall i \in \mathcal{U}$ \nllabel{line1}\\
$\mathcal{A}= \emptyset$: $\mathcal{A}$ is a set of UEs with $\mathcal{Y}_{i}\neq \emptyset$ \\
\For{$i \in \mathcal{U}$ \nllabel{line3}}{
\If{$\mathcal{Y}_i \neq \emptyset$}{$j^*=\underset{j}{\text{argmax }} (G_{i,j})$, $\forall j \in \mathcal{Y}_i$ and $G_{i}^{max} = G_{i,j^{*}}$\\
%$G_{i}^{max} = G_{i,j^{*}}$ \\
\eIf{$\mathcal{Y}_{j^*}=\emptyset$}{\eIf{UE $j^*$ is CH of cluster $u$}{$\mathcal{C}_u \leftarrow \mathcal{C}_u \cup \{ i \}$}{ $j^*$ is CH of a new cluster $u$: $\mathcal{C}_u =\{j^* , i\}$%($h_u = j^*$)\\
}$\mathcal{Y}_{i}= \emptyset$\\
%Balance uplink resources according to \eqref{beta_k_Cu} \nllabel{A1}
}{$\mathcal{A} \leftarrow \mathcal{A}\cup \{i\}$}
}
}\nllabel{line19}
UEs in $\mathcal{A}$ sorted in $G_{i}^{max}$ descending order \\
\While{$\mathcal{A}\neq \emptyset$ \nllabel{line21}}{
$i \leftarrow$ First UE in $\mathcal{A}$ \\
$j^*=\underset{j}{\text{argmax }} (G_{i,j}-G_{j,n}-G_{i,t})$, $\forall j,t \in \mathcal{Y}_i$ $\forall n \in \mathcal{Y}_j$ with \{   $\mathcal{Y}_t = \emptyset$ \textbf{or} $t\in \mathcal{A}$ \} \textbf{and} \{   $\mathcal{Y}_n = \emptyset$ \textbf{or} $n\in \mathcal{A}$ \} \\
\eIf{$G_{i,j^*} \leq 0$}{ 
$\mathcal{Y}_i = \emptyset$ and $\mathcal{A}  \leftarrow \mathcal{A} \setminus \{ i \}$ 
}{
\eIf{UE $j^*$ is CH of cluster $u$}{$\mathcal{C}_u \leftarrow \mathcal{C}_u \cup \{ i \}$ and $\mathcal{A}  \leftarrow \mathcal{A} \setminus \{ i \}$}{UE $j^*$ is CH of a new cluster $u$ ($h_u = j^*$)\\
$\mathcal{Y}_{j^*}= \emptyset$, $\>$ $\mathcal{C}_u =\{j^* , i\}$ and $\mathcal{A}  \leftarrow \mathcal{A} \setminus \{ i , j^*\}$}
}
}\nllabel{line34}

\end{algorithm}

eCORE manages to reduce the computational complexity by dividing the problem into two steps: the first step (lines 1-17) discards unfeasible clustering solutions, whereas the second step (lines 18-32) resolves conflicting cases. The first step is crucial to reduce the complexity, since it identifies potential cluster heads by figuring out if any of the associations would result in a reduction of the required resources. If not, that association is discarded (it is unfeasible for a spectral efficient cluster). In practice, the identification of potential cluster heads does not require a comparison of all users, since users farther than the D2D range can be discarded at the beginning.

In a nutshell, eCORE is an algorithm that checks which clusters can reduce the overall required PRBs. With this, not only the overall number of PRBs is reduced but traffic imbalance is decreased by transferring load from the downlink to the uplink.

\subsection{Clustering algorithm for Load Balancing (CaLB)}
\label{CaLB}

eCORE takes advantage of uplink and downlink traffic imbalance to decrease the downlink usage at the expense of an increase of the uplink usage (only if the downlink usage decrease is higher than the uplink usage increase). This fact limits, as it will be expounded hereafter, the maximum achievable capacity. Let us define the maximum number of PRBs allocated in the downlink and in the uplink to BS $k$ as $N^{d,max}_k$ and $N^{u,max}_k$. We can then define the saturation point of the cell (when the cell capacity reaches its limit) as the situation when either the downlink or the uplink cannot serve more traffic. Mathematically, the saturation is reached when

\begin{equation} \label{saturation}
\min (N^{u,max}_k - N^u_k , N^{d,max}_k - N^d_k  ) \approx 0
\end{equation}
\noindent where $N^u_k$ and $N^d_k$ are the PRBs used in each band in BS $k$ when there is no clustering. As traffic is generally more intense in downlink, when $N_k^d \gg N_k^u$ and $N^d_k \approx N^{d,max}_k$ it may be convenient to create clusters to increase the capacity even at the expense of a spectral efficiency decrease.

\begin{lemma} \label{load_balance_link}
Given a BS $k \in \mathcal{B}$ with an average number of required PRBs without clustering in the downlink and in the uplink equal to $N_k^d$ and $N_k^u$, respectively, the cell capacity is increased after creating the cluster $u$ (with $\mathcal{C}_u \subseteq \mathcal{U}_k$) if
\begin{equation}\label{link_cond}
\Delta N_k^u \leq \Delta N_k^d + (N_k^{u,max}-N_k^u)- (N_k^{d,max}-N_k^d)
\end{equation} 
\noindent even if the clustering gain is negative or null, i.e. $G_{\mathcal{C}_u}= \sum_{i \in \mathcal{C}_u \setminus \{ h_u \}} G_{i, h_u} = -(\Delta N_k^d+ \Delta N_k^u)  \leq 0$\footnote{According to \eqref{gain_definition}, the gain is defined as the reduction of the required PRBs, whereas $\Delta N_k^u$ and $\Delta N_k^d$ are defined as the increase of the required PRBs.}, where

\begin{equation} \label{delta_Nd}
\Delta N_k^d = \sum_{i \in \mathcal{C}_u \setminus \{ h_u \}} R_i^d \left( \phi_{k,h_u}^d - \phi_{k,i}^d \right)
\end{equation}
\begin{equation} \label{delta_Nu}
\Delta N_k^u = \sum_{i \in \mathcal{C}_u \setminus \{ h_u \}} R_i^d \left[    \phi_{h_u ,i}^u + \alpha_i \left( \phi_{h_u ,k}^u + \phi_{i, h_u}^u - \phi_{i,k}^u  \right) \right]
\end{equation}
\end{lemma}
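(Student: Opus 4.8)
\emph{Proof proposal.} The plan is to reduce the claim to a one‑line comparison of the downlink and uplink ``slacks'' of cell $k$ before and after the cluster $u$ is formed, using the saturation notion of \eqref{saturation}.

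First I would compute the per‑cell PRB counts after $\mathcal{C}_u\subseteq\mathcal{U}_k$ is created. Specializing \eqref{Nd}--\eqref{Nu} to the single cell $k$ with this single extra cluster and subtracting the no‑clustering values \eqref{Nd_no_cluster}--\eqref{Nu_no_cluster}, every non‑clustered term cancels and the clustered terms collapse once the $i=h_u$ contributions (which vanish identically) are removed; this yields precisely $\tilde N_k^d = N_k^d + \Delta N_k^d$ and $\tilde N_k^u = N_k^u + \Delta N_k^u$ with $\Delta N_k^d,\Delta N_k^u$ as in \eqref{delta_Nd}--\eqref{delta_Nu}. Summing \eqref{gain_definition} over $i\in\mathcal{C}_u\setminus\{h_u\}$ with $q=k$ (an intra‑cell cluster) and matching terms with \eqref{delta_Nd}--\eqref{delta_Nu} gives the footnote identity $G_{\mathcal{C}_u}=\sum_{i\in\mathcal{C}_u\setminus\{h_u\}}G_{i,h_u}=-(\Delta N_k^d+\Delta N_k^u)$ (consistent with Lemma \ref{cluster_head_selection}); this is what makes the hypothesis compatible with $G_{\mathcal{C}_u}\le 0$.

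Next I would make ``capacity increase'' precise through \eqref{saturation}: write the pre‑clustering slack as $s_k=\min\{N_k^{u,max}-N_k^u,\,N_k^{d,max}-N_k^d\}$ and the post‑clustering slack as $\tilde s_k=\min\{N_k^{u,max}-\tilde N_k^u,\,N_k^{d,max}-\tilde N_k^d\}$, the cell being saturated when this quantity vanishes; the target is then $\tilde s_k\ge s_k$, strict in the regime of interest. Two standing hypotheses enter here: (i) the cell is downlink‑limited ($N_k^d\gg N_k^u$ and $N_k^d\approx N_k^{d,max}$), so $s_k=N_k^{d,max}-N_k^d$; and (ii) the cluster head has the best downlink link to $k$ among the members, so by \eqref{delta_Nd} $\Delta N_k^d\le 0$ (and $<0$ unless all members share $h_u$'s downlink quality).

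Finally, the crux is a rearrangement of the hypothesis \eqref{link_cond}: it is equivalent to $(N_k^{d,max}-N_k^d)-\Delta N_k^d\le (N_k^{u,max}-N_k^u)-\Delta N_k^u$, i.e.\ $N_k^{d,max}-\tilde N_k^d\le N_k^{u,max}-\tilde N_k^u$. Hence the post‑clustering minimum is attained in the downlink, so $\tilde s_k=N_k^{d,max}-\tilde N_k^d=(N_k^{d,max}-N_k^d)-\Delta N_k^d\ge N_k^{d,max}-N_k^d=s_k$, with strict inequality whenever $\Delta N_k^d<0$, and this used no sign assumption on $G_{\mathcal{C}_u}$. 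The only genuinely delicate point is conceptual rather than computational: fixing the meaning of ``capacity'' via \eqref{saturation} and making explicit that the two hypotheses — a downlink‑bound cell and a cluster head with the best BS link — are exactly what upgrades the inequality $\tilde s_k\ge s_k$ to a strict gain.
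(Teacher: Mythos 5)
Your proposal is correct and follows essentially the same route as the paper: both rearrange \eqref{link_cond} into $N_k^{d,max}-N_k^d-\Delta N_k^d \leq N_k^{u,max}-N_k^u-\Delta N_k^u$ to identify the downlink as the post-clustering limiting band, and then conclude the available-slack gain from $\Delta N_k^d<0$. The only cosmetic difference is that you take the downlink-limited regime as a standing hypothesis, whereas the paper deduces it from \eqref{link_cond} together with the signs of $\Delta N_k^d$ and $\Delta N_k^u$; your added verification of $\tilde N_k^v=N_k^v+\Delta N_k^v$ and of the footnote identity is extra detail the paper leaves implicit.
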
     
\begin{proof}
We define the number of available PRBs in the limiting band (the most loaded band) as  $A = \min (N^{u,max}_k - N^u_k , N^{d,max}_k - N^d_k  )$. If uplink is the limiting band, then $A=(N^{u,max}_k - N^u_k)$. Knowing that, by definition, $\Delta N_k^u >0$ and $\Delta N_k^d <0$, it can be found that \eqref{link_cond} is not true. Therefore, $A=(N^{d,max}_k - N^d_k)$ must be true (the downlink is more loaded). Rearranging \eqref{link_cond} we obtain that $(N_k^{d,max}-N_k^d - \Delta N_k^d \leq   N_k^{u,max}-N_k^u - \Delta N_k^u   )$, and therefore the number of available resources in the limiting band after clustering is $A'=N_k^{d,max}-N_k^d - \Delta N_k^d$. As $\Delta N_k^d <0$, then $A' > A$.
\end{proof}

\begin{lemma} \label{CaLB_conflict}
Given two users $i,j \in \mathcal{U}_k$, where user $i$ is not clustered and user $j$ is a cluster head, the number of PRBs required in the DL decreases when $i$ joins the cluster headed by $j$ if $\Delta N_k^d (i,j)<0$, with
\begin{equation}
\Delta N_k^d (i,j)= R_i^d (\phi_{k,j}^d - \phi_{k,i}^d)
\end{equation}
If $j$ is not clustered, and given two additional users $m$ and $n$ that minimize $x_{i,j} = \Delta N_k^d (i,j)-\Delta N_k^d (i,m)-\Delta N_k^d (j,n)$,
user $i$ must join the cluster headed by $j$ to maximize the reduction in the required PRBs if $x_{i,j}\leq 0$. Conversely, if $x_{i,j}> 0$, users $i$ and $m$ should create a cluster and users $j$ and $n$ should create a second cluster.
\end{lemma}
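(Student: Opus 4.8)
The plan is to mirror the structure already used for Lemma \ref{cluster_head_selection}, but with the objective function replaced by the downlink-resource increment $\Delta N_k^d$ instead of the full clustering gain $G_{i,j}$. First I would establish the single-user formula $\Delta N_k^d(i,j) = R_i^d(\phi_{k,j}^d - \phi_{k,i}^d)$ directly from \eqref{delta_Nd}: a cluster with member set $\{i\}$ and head $j$ contributes exactly the $i$-term of that sum, so this is immediate. This gives the first claim: $i$ joining the cluster headed by $j$ reduces the downlink PRBs precisely when $\Delta N_k^d(i,j) < 0$, i.e. when $\phi_{k,j}^d < \phi_{k,i}^d$ (the head has the better downlink link to the BS, so it needs fewer PRBs to carry $R_i^d$).

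Next I would handle the conflicting case, where $j$ is itself still unclustered and therefore a candidate member for some other head $m$, while $i$ could alternatively be headed by $m$ and $j$ could head $n$. The key observation — exactly as in Lemma \ref{cluster_head_selection} — is additivity: the total downlink reduction produced by a set of disjoint single-member clusters is the sum of the individual $\Delta N_k^d$ terms. So the decision is a comparison between two mutually exclusive configurations: (a) $i$ joins $j$'s cluster, yielding reduction $\Delta N_k^d(i,j)$ (plus whatever is obtained elsewhere); versus (b) $i$ joins $m$'s cluster and $j$ heads a cluster containing $n$, yielding $\Delta N_k^d(i,m) + \Delta N_k^d(j,n)$. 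Configuration (a) is at least as good as (b) iff $\Delta N_k^d(i,j) \le \Delta N_k^d(i,m) + \Delta N_k^d(j,n)$, i.e. iff $x_{i,j} = \Delta N_k^d(i,j) - \Delta N_k^d(i,m) - \Delta N_k^d(j,n) \le 0$; and since $m,n$ are chosen to minimize $x_{i,j}$, this is the worst case over all alternative pairings, so $x_{i,j}\le 0$ certifies that pairing $i$ with $j$ is optimal, while $x_{i,j} > 0$ shows the split configuration strictly dominates. Combining the two parts, the lemma follows from \eqref{Nd_no_cluster}--\eqref{Nu} restricted to the downlink and to a single cell, exactly as Lemma \ref{cluster_head_selection} was derived.

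The main obstacle, and the point that needs a careful sentence or two rather than a one-line appeal to additivity, is the implicit feasibility bookkeeping: when $j$ becomes a cluster head it must remain directly connected to the BS (constraint \eqref{constrain_e}) and cannot simultaneously be a member of $m$'s cluster (constraint \eqref{constrain_f}), and similarly $n$ must be available to join $j$. So the comparison is only legitimate among configurations that are actually realizable, which is why the minimization defining $x_{i,j}$ must implicitly range over valid $m$ and $n$ (those not already committed elsewhere) — this is the analogue of the side conditions "$\mathcal{Y}_t = \emptyset$ or $t \in \mathcal{A}$" appearing in Algorithm \ref{alg:the_alg}. Once that restriction is stated, the argument is a direct transcription of the Lemma \ref{cluster_head_selection} proof with $G$ replaced by $-\Delta N_k^d$, so I would keep the written proof short and refer back to that lemma.
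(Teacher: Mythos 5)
Your proof is correct and follows essentially the same route as the paper's: the first claim is read off directly from the definition of $\Delta N_k^d(i,j)$ as the downlink increment of a one-member cluster, and the conflict case is resolved by the same additive comparison of $\Delta N_k^d(i,j)$ against $\Delta N_k^d(i,m)+\Delta N_k^d(j,n)$ over the best feasible alternatives. One small slip in your prose: with the convention that the second argument of $\Delta N_k^d(\cdot,\cdot)$ denotes the cluster head, the term $\Delta N_k^d(j,n)$ corresponds to $j$ joining a cluster headed by $n$ (the paper indeed says $j$ becomes a cluster \emph{member} of an alternative cluster), not to $j$ heading a cluster that contains $n$.
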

\begin{proof}
The first case is trivial, since $\Delta N_k^d (i,j)$ is, by definition, the increase in the downlink PRBs. If it is negative, the number of required PRBs decreases. If user $j$ is not a cluster head (second case), user $j$ can become the cluster head of user $i$ or the cluster member of an alternative cluster. In that case, if $n=\arg\min\limits_{q} \{ \Delta N_k^d(j,q) \}$ and $m=\arg\min\limits_{q} \{ \Delta N_k^d(i,q) \}$, the maximum overall reduction of PRBs would be $\Delta N_k^d(i,m)+\Delta N_k^d(j,n)$. Therefore, the maximum reduction of the PRBs in the downlink would result from clustering $i$ and $j$ if $\Delta N_k^d(i,j) <\Delta N_k^d(i,m)+\Delta N_k^d(j,n)$ (i.e. if $x_{i,j}<0$).  
\end{proof}

In order to further extend the capacity provided by eCORE, which is achieved by creating spectral efficient clusters, CaLB is proposed, mainly based on Lemmas \ref{load_balance_link} and \ref{CaLB_conflict}. It is aimed to improve the capacity when no additional spectral efficient clusters can be created, the downlink reaches the capacity limit and the uplink is still unloaded (see Algorithm \ref{alg:CaLB}). Therefore, CaLB is always run after the execution of eCORE. The inputs of CaLB are the set of users and clusters created by eCORE and two load thresholds, $n^d_{min}$ and $n^u_{min}$ for the downlink and uplink, respectively. These thresholds are used to determine whether a BS downlink and uplink are loaded or not as follows: if the number of available PRBs in the downlink, denoted in Algorithm \ref{alg:CaLB} by $n^d$ (see line \ref{CaLBline2}), is below $n^d_{min}$, the downlink of the BS is loaded; similarly, if the number of available PRBs in the uplink, denoted by $n^u$ (line \ref{CaLBline2}), is higher than $n^u_{min}$, the uplink of the BS is considered unloaded. Only in this case, each BS executes CaLB and triggers the clustering procedure (line \ref{CaLBline7}).

The algorithm establishes the clusters that reduce the load in the downlink, either by joining users to existing clusters or by establishing new clusters. To do that, all possible pairs of users (defined as $\mathcal{Q}_k$ in Algorithm \ref{alg:CaLB}) are ordered according to the reduction that would be achieved in the number of required downlink PRBs if clustered (i.e. $\Delta N_k^d (i,j)$). Note that there are constraints in this clustering process to prevent spectral efficient clusters (established by eCORE) from being destroyed. First, the cluster head of an existing cluster can serve new users by enlarging the cluster; that is, unclustered users can join existing clusters. A cluster head will not leave an existing cluster to become the cluster member of a new cluster. Finally, the clustering of a user must always result in a decrease of the downlink resources; therefore, 
the channel gain to the BS is higher for the cluster head than for the rest of cluster members ($\phi^d_{k, h_u } < \phi^d_{k, i }$ when a user $i$ joins a cluster head $h_u \in \mathcal{U}_k$). Based on the aforementioned constraints and on Lemma \ref{CaLB_conflict}, CaLB favours the clustering until the number of available PRBs in the downlink is larger than $n^d_{min}$ or the number of available PRBs in the uplink reaches the minimum, $n^u_{min}$.

\begin{algorithm}[h!]
\DontPrintSemicolon
\SetAlgoLined
\caption{Clustering algorithm for Load Balancing (CaLB) } \label{alg:CaLB}
\KwData{$n^d_{min},n^u_{min},  \{ \mathcal{U}_k, \mathcal{H}_k, N _k^{d,max}, N _k^{u,max}, \tilde{N}_k^d, \tilde{N}_k^u  \}_{ \forall k \in \mathcal{B}}$}
\KwResult{Set of Clusters $\mathcal{C} = \bigcap_u \mathcal{C}_u$ \\
}
\For{$k \in \mathcal{B}$ }{
$n^d = N_k^{d,max}-\tilde{N}_k^d$ and $n^u = N_k^{u,max}-\tilde{N}_k^u$ \nllabel{CaLBline2} \\
\If{$n^d <n^d_{min}$ }{
Define $\mathcal{Q}_k = \{ (i,j): \phi_{k,j}^d < \phi_{k,i}^d,  \forall i \in \mathcal{U}_k \setminus \mathcal{C}, \forall j \in (\mathcal{U}_k \setminus \mathcal{C}) \cup \mathcal{H}_k  \}$   \\
$(i,j) \in \mathcal{Q}_k$ are sorted in ascending order in $\mathcal{Q}_k$ based on $\Delta N_k^d (i,j)=R_i^d (\phi_{k,j}^d - \phi_{k,i}^d)$ 

\While{$\mathcal{Q}_k \neq \emptyset$ {\bf and}  $n^u \geq n^u_{min}$ {\bf and} $n^d < n^d_{min}$ \nllabel{CaLBline7}}{
$(i,j) \leftarrow$ First pair of nodes in $\mathcal{Q}_k$ \\ 
${\small \Delta N_k^u (i,j)=R_i^d \left( \phi_{j,i}^u + \alpha_i (\phi_{j,k}^u + \phi_{i,j}^u - \phi_{i,k}^u)      \right)} \normalsize$ \\
\eIf{$n^u+\Delta N_k^u (i,j) \geq \epsilon^u$}{

\eIf{$\exists u : j=h_u$}{
$\mathcal{C}_u \leftarrow \mathcal{C}_u \cup \{ i \}$ \\
$\mathcal{Q}_k  \leftarrow \mathcal{Q}_k \setminus \{ (i,m) : \forall m \neq i \}$ \\
$n^v \leftarrow n^v + \Delta N_k^v (i,j)$ for $v = \{ u,d\}$  \\
}{
Association according to Lemma \ref{CaLB_conflict} and update of $\mathcal{Q}_k$, $\mathcal{C}$, $n^d$ and $n^u$ 
}

}{
$\mathcal{Q}_k  \leftarrow \mathcal{Q}_k \setminus \{ (i,j) \}$
}
}

}
}

\end{algorithm}

To sum up, CaLB resumes the clustering process carried out by eCORE to create additional clusters. The created clusters are not spectral efficient, but reduce the uplink and downlink imbalance. CaLB is particularly appropriate when the downlink is highly loaded.

\section{Impact on Energy Consumption}
\label{S:energy}

Algorithms eCORE and CaLB rely on the set-up of cluster heads under the conditions stated in Section \ref{S:proposal}. However, the role of cluster head entails energy consuming tasks, e.g. receiving and retransmitting the data of the rest of cluster members. Therefore, the role of cluster head can cause early battery drain. In this section, the expression of the energy consumption of each stakeholder (i.e. cluster head, a cluster member and a non-clustered user) is derived, and the mitigation of possible energy overconsumption of the clustering approach is studied. In the following, the energy consumption expressions are derived in Section \ref{SS:energy_analysis}. In Section \ref{SS:optimal_energy} these expressions are used to modify the optimal clustering problem defined in Section \ref{SS:optimal} and to include energy overconsumption limits. Section \ref{SS:practical_approach} proposes a low complexity Clustering Energy Efficient algorithm (CEEa).

\subsection{Energy Consumption Analysis}
\label{SS:energy_analysis}

The energy consumption of a UE depends on two main factors: the Radio Resource Control (RRC) state of the device, that can be RRC\_CONNECTED or RRC\_IDLE, and the transmitted power \cite{VTC_power}. Let us define the RRC state space as $\mathcal{S} = \{  I, C_{tx}, C_{rx} \}$, where $I$ stands for the RRC\_IDLE state and the RRC\_CONNECTED state has been decoupled for convenience into two states, the transmitting state $C_{tx}$ and the receiving state $C_{rx}$. We also define $\mathcal{S}_C = \{ C_{rx}, C_{tx} \}$, i.e. $\mathcal{S}=\mathcal{S}_C \cup \{ I \}$. Based on this, the energy consumed by user $i$ during a subframe time $T^s$ is given by $E_i = T^s (P_{s_i} + P_{tx_{i}})$, where $P_{s_i}$ is the power consumed when user $i$ is in state $s_i \in \mathcal{S}$ and $P_{tx_{i}}$ is the transmitted power. The transmitted power differs in D2D mode (the intra-cluster communications) and in the communication with the BS. Hence, transmitted power of user $i$ is described in LTE \cite{36213} by,% \textcolor{red}{[REF]},   
\begin{equation}\label{Ptxi}
P_{tx_i} = \left\{
\begin{array}{ll}
M_i P_0 h_{i,k}^{-\xi} & \text{ if connected to BS } k \\
M_i P_{d2d} & \text{ if connected in D2D mode}  
\end{array}
\right.
\end{equation}

\noindent where $M_i$ is the number of PRBs scheduled for user $i$, $P_0$ is the target received power at BS $k$, $h_{i,k}$ is the channel gain between user $i$ and BS $k$, $\xi \in [0,1]$ is the compensating factor and $P_{d2d}$ is the transmitted power per PRB in D2D mode. For the sake of simplicity, in the following the role played by user $i$ is denoted by $\rho_i = \{ H,M,N \}$, with $\rho_i =H$ for a cluster head, $\rho_i =M$ for the rest of the cluster members and $\rho_i =N$ for the non-clustered users. Note that a user $i$ is directly connected to a BS if $\rho_i = \{H,N\}$, whereas it is in D2D mode if $\rho_i = M$. %

In the following, until explicitly mentioned, no mobility is considered. Therefore, each user is characterized by its profile $\pi_i$, the role $\rho_i$ and the location (channel gains with the rest of UEs and BSs), and the expected energy consumed during a subframe is expressed as
\begin{equation}\label{E_Ei}
\mathbb{E} [E_i  | \rho_i] = T^s \mathbb{E} [P_i  | \rho_i] = T^s  \mathbb{E}[P_{s_i} | \rho_i] + T^s \mathbb{E} [P_{tx_i}  | \rho_i] 
\end{equation}

\noindent where, by definition,

\begin{eqnarray}\label{expected_Psi}
\mathbb{E} [P_{s_i} | \rho_i]  &=& \mathbb{P} \{ s_i = I  | \rho_i\} P_I +  \mathbb{P} \{ s_i \in \mathcal{S}_C  | \rho_i\}  P_C \nonumber \\
&=& \mathbb{P} \{ s_i \in \mathcal{S}_C  | \rho_i\} ( P_C - P_I ) + P_I
\end{eqnarray}
\noindent where $P_I$ is the power consumed in state $s_i = I$ and $P_C$ is the power consumed in state $s_i \in \mathcal{S}_C$. Note that the probability of being in state $s_i$ depends on the role of the user. For instance, $\mathbb{P} \{ s_i = I | \rho_i = H  \} \leq \mathbb{P} \{ s_i = I | \rho_i = N  \}$. Taking into account that the cluster head forwards both the uplink traffic of all cluster members to the BS, and the downlink traffic to the cluster members (intra-cluster communications in D2D mode), the expected transmitted power of a user $i$ connected either to BS $k$ or to cluster head $h_u$ can be easily found using \eqref{Ptxi}.
\begin{equation} \label{E_Ptxi}
\mathbb{E} [P_{tx_i} | \rho_i] = \left\{
\begin{array}{ll}
P_0 h_{i,k}^{-\xi} \alpha_i R_i^d \phi_{i,k}^u & \text{ if } \rho_i = N \\
P_{d2d} \alpha_i R_i^d \phi_{i, h_u}^u  & \text{ if } \rho_i = M \\
%P_0 h_{i,k}^{-\xi} \alpha_i R_i^d \phi_{i,k}^u + \displaystyle \sum_{j \in \mathcal{C}_u \setminus  \{ h_u \}} R_j^d (\alpha_j P_0 h_{i,k}^{-\xi} \phi_{i,k}^{u} + P_{d2d} \phi_{i,j}^u ) & \text{ if } i \in \mathcal{H}_k 
P_0 h_{i,k}^{-\xi} \phi_{i,k}^u  \displaystyle \sum_{j \in \mathcal{C}_u }    \alpha_j R_j^d  + & \\
+ P_{d2d} \displaystyle \sum_{j \in \mathcal{C}_u \setminus  \{i \}} R_j^d  \phi_{i,j}^u  & \text{ if } \rho_i = H 
\end{array}
\right.
\end{equation}

\subsection{Optimal clustering with energy consumption constraints}
\label{SS:optimal_energy}

As mentioned, cluster heads tend to experience higher energy consumption than the rest of users. In order to limit the energy consumed by the cluster head, the problem defined in \eqref{opt_problem} must be modified to include the energy consumption constraint. If we define $w >0$ as the maximum allowed increase of the expected power/energy of a cluster head, the expected power consumed by a cluster head should not exceed the power consumed if it was not clustered:
\begin{equation}\label{power_condition}
\mathbb{E} [ P_i | \rho_i=H ] \leq (1+w) \mathbb{E} [ P_i | \rho_i=N ]
\end{equation}

As shown in \eqref{E_Ei}-\eqref{E_Ptxi}, the total power depends on the probability $\mathbb{P} \{  s_i \in \mathcal{S}_C  | \rho_i \}$ and on the transmitted power. Regarding the former, when the user $i$ is the cluster head, the probability can be divided into two components: the probability of $s_i \in \mathcal{S}_C$ due to the time required to transmit/receive its own traffic from/to the BS $k$ ($\theta_{i,k}^{N}$) and due to the time required to forward the traffic of the rest of the cluster members ($\theta_{i,j,k}^{H}$, for all users $j$ in the cluster).
\begin{equation} \label{P_si_divided}
\mathbb{P} \{  s_i \in \mathcal{S}_C | \rho_i = H\} = \sum_{k \in \mathcal{B}} (  x_{i,k} \theta_{i,k}^{N} + \sum_{j \in \mathcal{U}}  y_{j,i} \theta_{i,j,k}^{H} ) 
\end{equation}

\noindent where $x_{i,k}=1$ if user $i$ is served by BS $k$ and $x_{i,k}=0$ otherwise; and $y_{j,i}=1$  when user $i$ acts as the cluster head of user $j$ and $y_{j,i}=0$ otherwise (expressions for $\theta_{i,k}^{N}$ and $\theta_{i,j,k}^{H}$ are derived in Appendix \ref{S:definitions_theta}). By using \eqref{E_Ei}-\eqref{E_Ptxi} and \eqref{P_si_divided}, the components of \eqref{power_condition} can be written as
\begin{equation}
\mathbb{E} [ P_i | \rho_i= N ] = \theta_{i,k}^{N} \Delta P_{CI}  + P_I + \mathbb{E} [ P_{tx_i} | \rho_i = N  ]
\end{equation}
\begin{eqnarray}
{\small \mathbb{E} [ P_i  | \rho_i = H]}=\displaystyle \small P_I  + \sum_{k \in \mathcal{B}} x_{i,k} \left( \Delta P_{CI} \theta_{i,k}^{N} + \mathbb{E} [ P_{tx_i} | \rho_i = N  ] \right) \nonumber \\
+\sum_{j \in \mathcal{U}} y_{j,i} ( \theta_{i,j,k}^{H} \Delta P_{CI} + \mathbb{E} [ P_{tx_i} | \rho_i = H, j ])  
\end{eqnarray}
\noindent where $\Delta P_{CI}=P_C-P_I$ and $\mathbb{E} [ P_{tx_i} | \rho_i = H, j  ]$ is the power consumed by the cluster head attributable to the traffic of cluster member $j$, and it is defined as
\begin{equation}
\mathbb{E} [ P_{tx_i} | \rho_i = H, j  ] = R_j^d \left( P_0 h_{i,k}^{-\xi} \phi_{i,k}^u \alpha_j  + P_{d2d} \phi_{i,j}^u \right)
\end{equation}

Parameter $w$ must be selected to limit the energy overconsumption of cluster heads while allowing the creation of clusters. For instance, if only a 5\% power increase is allowed ($w=0.05$), cluster heads will not suffer from rapid battery drain but, in many cases, the establishment of some clusters will be compromised. Therefore, the optimization problem constrained by the energy consumption of the cluster heads results from including \eqref{power_condition} as a constraint into \eqref{opt_problem}.

\subsection{Clustering Energy Efficient algorithm (CEEa)}
\label{SS:practical_approach}

Due to the complexity of the optimization problem, in this Section we present a low complexity algorithm, namely CEEa, to manage the different energy consumption of each user. Note that, in principle, the energy consumed by a cluster head is higher than the energy consumed by a non-clustered user. Therefore, the energy consumption is clearly a disincentive for users to become cluster heads, even when $w$ is small. In a scenario without mobility, this disincentive can hardly be addressed (they can only be limited, as proposed in Section \ref{SS:optimal_energy}), but the changing environment offered by mobility opens up new possibilities. In order to analyse these possibilities, in the sequel the analysis is carried out as a function of time. 

Let us define the observation period $T_{\varepsilon}$ as the time during which the energy consumption is analysed to prevent users from energy overconsumption. For each user $i$, $T_{\varepsilon}$ can be divided into subperiods $T_{i,n} = [t_{i,n}^0, t_{i,n}^1) \in \mathbb{R}^2$ during which the role of user $i$ remains constant, i.e. $\rho_i (t_{i,n}^0) \neq \rho_i (t_{i,n}^0 - \delta t)$ for $\delta t \rightarrow 0$, and $t_{i,n}^1 = \max \{t : \rho_i (t) = \rho_i (t_{i,n}^0) , t > t_{i,n}^0  \}$. Based on the definitions, the time during which each user plays a specific role is the aggregation of periods with the same $\rho_i (t)$. Thus, three sets of periods $\mathcal{T}_i^H$, $\mathcal{T}_i^M$ and $\mathcal{T}_i^N$ are defined as $\mathcal{T}_i^m = \{ T_{i,n} : \rho_i ( t_{i,n}^0) = m  \}$ for $m = \{ H,M,N \}$.
If we denote the power consumed by user $i$ at time $t$ with role $\rho_i(t) = m$ as $P_i^m (t)$, and the power that would have been consumed by user $i$ at time $t$ in case of not being clustered as $\tilde{P}_i^N (t)$, the energy consumed over a subperiod $T_{i,n} \in \mathcal{T}^m$ with $m=\{ H,M \}$ and the energy that would have been consumed if $\rho_i (t) = N$ are given by $E_i^m(T_{i,n})= \int_{T_{i,n}} P_i^m (t) dt$ and $\tilde{E}_i^N(T_{i,n})= \int_{T_{i,n}} \tilde{P}_i^N (t) dt$ (the estimate of $\tilde{E}_i^N(t)$  can be found in Appendix \ref{S:EiN}). 
If the definition of energy overconsumption, namely $w(T_{\varepsilon})$, is given by $E_i^m(T_{\varepsilon})=(1+w(T_{\varepsilon}))\tilde{E}_i^N(T_{\varepsilon})$, it can be rewritten as%according to $E_i^m(T_{\varepsilon})=(1+w(T_{\varepsilon}))\tilde{E}_i^N(T_{\varepsilon})$. Therefore, it can be written as
\begin{equation}
w(T_{\varepsilon}) = \displaystyle\frac{\sum_{T_{i,n} \in (\mathcal{T}_i^H \cup \mathcal{T}_i^M) } E_i^{\rho_i (t_{i,n}^0)} (T_{i,n})  }{\sum_{T_{i,n} \in (\mathcal{T}_i^H \cup \mathcal{T}_i^M) } \tilde{E}_i^{N} (T_{i,n}) }-1
\end{equation}

As, by definition, $P_i^H (t) > \tilde{P}_i^N (t) > P_i^M (t)$, user $i$ experiences energy overconsumption due to clustering if $w (T_{\varepsilon}) > 0$. Although the theoretical objective is to keep the total overconsumption around 0 in the long-term,  i.e. $\lim \limits_{T_{\varepsilon} \rightarrow \infty}w (T_{\varepsilon}) \approx 0$, in practice overconsumption must be limited over finite periods of time to avoid early battery drain. 
In the following CEEa is proposed to limit such overconsumption.

CEEa (see Algorithm \ref{alg:CEEa}) limits the overconsumption of users involved in the cluster by setting a maximum overconsumption threshold, referred to as $w_{max}$, that cannot be exceeded along the observation period $T_{\varepsilon}$. This observation period is divided into a set of $n_{\varepsilon}$ subperiods of duration $t_{\varepsilon}$, such that $T_{\varepsilon}=n_{\varepsilon}t_{\varepsilon}$. Specifically, for a given set of users, CEEa creates a list of users that cannot become cluster heads due to excessive energy consumption in the past, denoted by $\mathcal{Z}$, which is included as a constraint in eCORE. The maximum overconsumption condition, i.e. $E_i^m(t)>(1+w_{max})\tilde{E}_i^N(t)$, is checked at the end of each subperiod of duration $t_{\varepsilon}$ in two different ways: first, the energy consumption condition is checked for the total time since the beginning of the observation period (line \ref{CEEaline5}); secondly, the condition is checked for the actual subperiod (line \ref{CEEaline6}). Despite experiencing a total overconsumption, the user is not banned from remaining as cluster head if such overconsumption is not also experienced in the current subperiod, since it means that the overconsumption is starting to be compensated. Likewise, if the time during which the user has had the role $\rho_i=M$ until time $t$, $\tau_i^M (t)$, is smaller than the time during which it has had $\rho_i=H$ until time $t$, $\tau_i^H (t)$, the user cannot be cluster head. This condition works proactively to cope with situations where the cluster head suffers from slight but constant overconsumption. CEEa aims to compensate the overconsumption within $T_{\varepsilon}$. Therefore, the threshold $w_{max}$ is reduced at every observation subperiod with a factor ($\frac{n_{\varepsilon}-1}{n_{\varepsilon}}$) as the observation period draws on, since the higher $n_i$ is, the more difficult to compensate the energy consumption in the remaining $n_{\varepsilon}-n_i$ subperiods is.

Although there is not apparent incentive for a user to become cluster head in the short-term, this is not actually true. In loaded scenarios, not only cell-edge users can benefit from the proposed clustering, but also most of the users (even the cluster heads themselves, since the depletion of resources can impact on the resources allocated to them). In this context, CEEa eliminates the disincentive to become cluster head. The detection of selfish users is out of the scope of CEEa, but the proposed clustering algorithm does not preclude the design and implementation of additional algorithms running on top of CEEa to prevent selfish behaviours.

\begin{algorithm}[h!]
\DontPrintSemicolon
\SetAlgoLined
\caption{Clustering Energy Efficient Algorithm (CEEa)} \label{alg:CEEa}
\KwData{$\mathcal{U}$, $n_i\in [1 \ldots n_{\varepsilon}]$ for $\forall i \in \mathcal{U}$}
\KwResult{Set of users banned as cluster heads: $\mathcal{Z}$}
Initialization: if  $n_i= 1, \forall i \in \mathcal{U} \Rightarrow w_i = w_{max}$; $E_i=0$; $\tilde{E}_i=0$ \\
\For{$i \in \mathcal{U} : \rho_i(n_i t_{\varepsilon})= \{ H,M \}$}{
$ t_i=[(n_i-1) t_{\varepsilon},n_i t_{\varepsilon}]$ and $m= \rho_i(n_i t_{\varepsilon})$\\
\eIf{$E_i  > (w_i +1) \tilde{E}_i$ \nllabel{CEEaline5}}{
\uIf{$E_i^m (t_i) > (w_i +1) \tilde{E}_i^N (t_i) $ \nllabel{CEEaline6}}{
$\mathcal{Z} \leftarrow \mathcal{Z} \cup \{ i \}$
}\uElseIf{$\tau_i^{H} (n_i t_{\varepsilon}) > \tau_i^{M} (n_i t_{\varepsilon})$}{
$\mathcal{Z} \leftarrow \mathcal{Z} \cup \{ i \}$
}\Else{
$\mathcal{Z} \leftarrow \mathcal{Z} \setminus \{ i \}$
}
}{
$\mathcal{Z} \leftarrow \mathcal{Z} \setminus \{ i \}$
}
$w_i \leftarrow w_i \left(  \frac{n_{\varepsilon}-1}{n_{\varepsilon}} \right)$ and $n_i \leftarrow n_i +1$ \\
$E_i \leftarrow E_i +E_i^m (t_i)$ and $\tilde{E}_i \leftarrow \tilde{E}_i +\tilde{E}_i^N (t_i)$
}
\end{algorithm}

\section{Numerical Results}
\label{S:results}

\subsection{Scenario}
\label{SS:scenario}

In this section the proposed algorithms are validated and compared with existing algorithms found in the literature and with the results when no clustering algorithms are implemented (labelled in figures as \textit{Without Clustering} or \textit{w/o Clust.}). A custom-made simulator implemented in C++ has been used to simulate a network, which consists of a central eNB (macro BS) and the first interfering ring of 6 eNBs, with and inter-site distance of 500m. Under the coverage area of each eNB, 4 small cells are randomly deployed. The minimum distance between the eNB and a SC is 125m and the minimum inter-SC distance is 25m \cite{36842}. All eNBs are equally loaded and simulated, but only results from the central eNB and the corresponding 4 small cells are collected. Results are averaged over 1000 iterations. Users move at a constant speed of 3 km/h (pedestrian). The hit and bounce technique is used when users move out of the scenario under analysis: that is, when the user reaches the edge, it moves back into the scenario with a random direction \cite{36839}. 50\% of the deployed users are characterized by VoIP traffic (symmetric traffic with 64 kbps in the downlink and in the uplink) while the rest of users demand FTP or streaming traffic (700 kbps in the downlink).

The system is FDD and spectrum resource partition is considered between eNBs and SCs: eNBs and SCs operate in different bands \cite{36872}. No interference coordination techniques are considered in the simulations, and the PRBs are allocated randomly among users. Although interference coordination could lead to higher SINR levels (and better results), it has been omitted to better characterize the performance of the proposed algorithms. Moreover, interference coordination techniques are transparent for the proposed algorithms. Both users and the BS have a single antenna (SISO), and the spectral efficiency look-up table has been obtained from \cite{book2}. The rest of the parameters can be found in Table \ref{Simulation Parameters} \cite{36814}.

\begin{table}[!ht]
\centering
\caption{Simulation Parameters}
\label{Simulation Parameters}
\begin{tabular}{c|c}
\hline
 Parameter &  Value\\
\hline\hline
Bandwidth & Macro: 10 MHz \& Small Cell: 5 MHz \\
\hline
Macro cell Path-Loss   & $128.1 + 37.6  log_{10}$(distance km)   \\ 
\hline
Small cell Path-Loss   & $140.7+ 36.7  log_{10}$(distance km)    \\ 
\hline
D2D Path-Loss          & $148+40log_{10}$(distance km)           \\ 
\hline 
Max. BS Transmission power & Macro: 46 dBm \& Small Cell: 27 dBm \\
\hline
Max. UE Transmission power & Cellular: 20 dBm \& D2D: 18 dBm \\
\hline
\end{tabular}
\end{table}

\subsection{Results}
\label{SS:Results}

The objective of the optimal clustering for spectral efficiency stated in Section \ref{SS:optimal} (problem \eqref{opt_problem}) and labelled in figures as \textit{Optimal Clustering}, is the minimization of the total number of PRBs required to serve the traffic (i.e. the maximization of the spectral efficiency). Similarly, the optimal clustering with energy consumption constraints, detailed in Section \ref{SS:optimal_energy} and labelled hereafter as \textit{Energy Constrained}, is aimed to minimize the required PRBs while imposing energy overconsumption constraints for cluster heads. The spectral efficiency (bps/Hz) of these two solutions can be observed in Fig. \ref{fig:utilization} for 60 users, along with the results for our previous work CORE \cite{globecom}, and the proposed eCORE, CaLB (with  $n^d_{min} = 0.2N^{d,max}_k$, $n^u_{min} = 0.1N^{u,max}_k$) and CEEa (with $w=0.2$). It can be seen that the \textit{Optimal Clustering} is able to increase the spectral efficiency in the downlink band by clustering users and, therefore, by exploiting the good quality of the link between the BS and the cluster head. For instance, spectral efficiency in the downlink band rises a 54\% (from 1.26 bps/Hz to 1.95 bps/Hz) when \textit{Optimal Clustering} is applied in a scenario with 60 users. Although clustering solutions incur in additional PRBs utilization in the uplink band due to intra-cluster communications, it can also be observed that the total spectral efficiency (uplink and downink) increases. Therefore, the higher uplink band utilization is overcompensated by the downlink improvement. As it will be seen in Fig. \ref{fig:throughput}, when no clustering solution is applied, cell-edge users are not served due to low spectral efficiency. 
Fig. \ref{fig:utilization} also shows the spectral efficiency of \textit{Energy Constrained} when maximum energy overconsumption of the optimal clustering is limited to 10\% and to 50\% (i.e. $w=0.1$ and $w=0.5$, respectively). As expected, the overconsumption constraint prevents clusters from being set up if they result in excessive energy overconsumption. Thus, only clusters that are simultaneously spectral efficient and that keep cluster heads consumption below a threshold (i.e. $w$) are set up. This is the reason why the spectral efficiency is lower as the energy constraint becomes more restrictive (i.e. a lower $w$). For instance, the downlink spectral efficiency is 1.27 bps/Hz when $w=0.1$ and 1.36 bps/Hz when $w=0.5$. 
Some insights can be found in Table \ref{t:Clusters}, where the average number of clusters and the average size of each cluster are shown for 30 and 60 users. In the \textit{Energy Constrained} solution, the reduction of $w$ (lower overconsumption is allowed) has a higher impact on the number of clusters created than in the size of the cluster. That is, whereas the size of the cluster remains stable, overconsumption constraints cause a significant reduction in the average number of clusters. 

Besides the results of the optimization problems stated in Sections \ref{SS:optimal} and \ref{SS:optimal_energy}, Fig. \ref{fig:utilization} also includes the results for CORE, eCORE, CaLB and CEEa. eCORE achieves results (in terms of spectral efficiency) very close to the optima, with a performance less than 5\% lower than  \textit{Opt. Clust.} 
Moreover, eCORE manages to increase the downlink spectral efficiency with respect to CORE, since it enables the establishment of clusters among users from different cells. 

Table \ref{t:Clusters} shows that the intensification in the creation of clusters promoted by eCORE results in the setup of more clusters, although with a similar size. For instance, for 60 users eCORE doubles the number of clusters with respect to CORE while the average size of each cluster is approximately the same.

Something similar occurs with CaLB: the number of clusters grows more than the average size of the clusters. That is, CaLB creates new clusters rather than enlarge the clusters initially established by eCORE. However, note that CaLB enables the creation of non-spectral efficient clusters if the imbalance between uplink and downlink is thereby reduced. This is the reason why although the downlink spectral efficiency in CaLB is higher than in eCORE, the opposite occurs with the total spectral efficiency (uplink and downlink bands). Finally, as CEEa limits the energy consumption by deterring some users from being/remaining cluster heads, the spectral efficiency is reduced with respect to eCORE and CaLB. Table \ref{t:Clusters} also shows that the number of clusters is reduced due to the energy consumption constraints.
    
\begin{figure}[!ht]
\centering
\includegraphics[scale=.6]{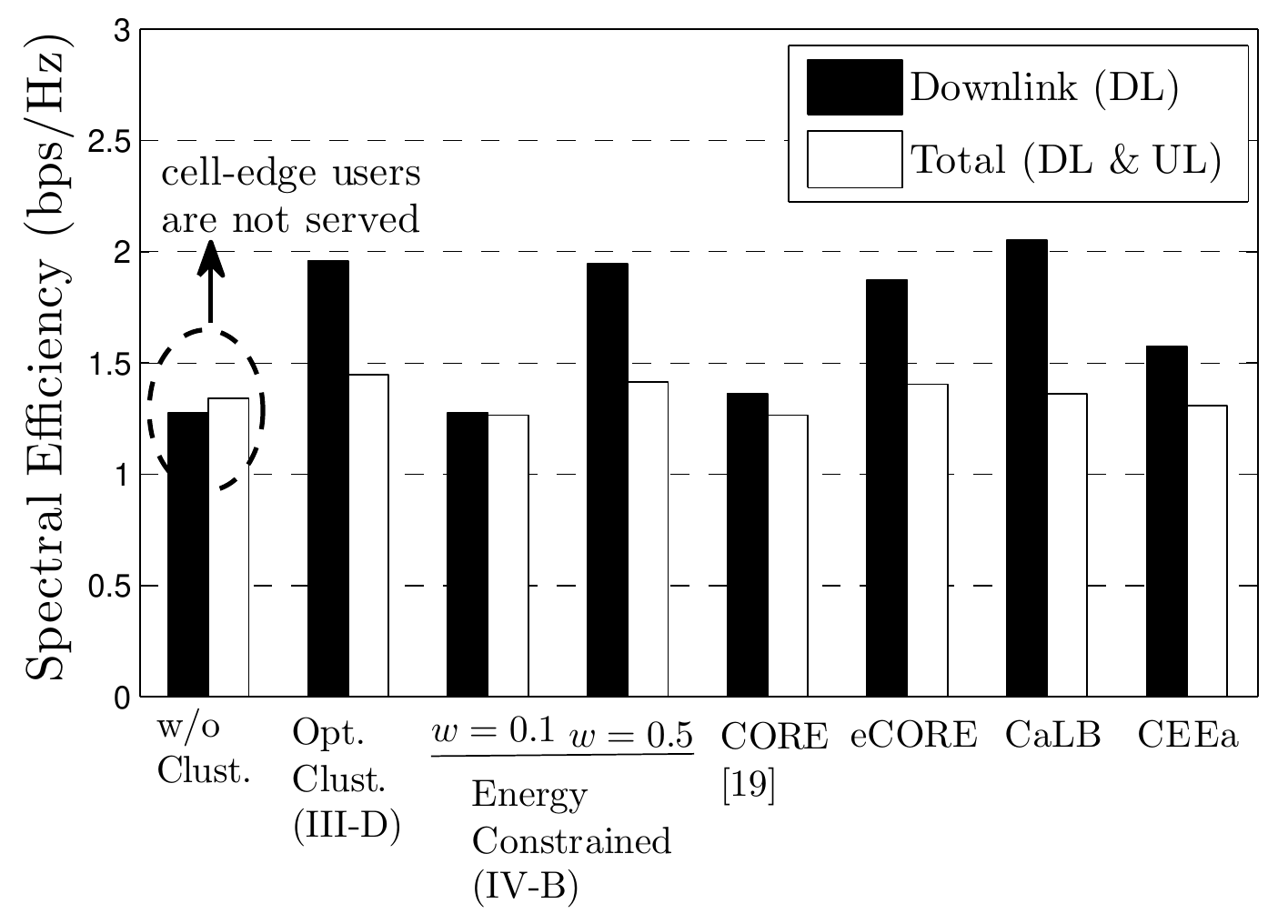} 
\caption{Downlink and total spectral efficiency for 60 users.}
\label{fig:utilization}
\end{figure}

\begin{table}[!ht]
\centering
\caption{Average number and size of clusters}
\label{t:Clusters}
\begin{tabular}{c|cc|cc}
\hline
 & \multicolumn{2}{c|}{Avg. Num. Clusters} & \multicolumn{2}{c}{Avg. Cluster Size} \\ \hline 
Num users & 30 & 60 & 30 & 60 \\ \hline \hline
Optimal Clustering & 5.38	& 11.45 & 2.37 & 2.67 \\
Energy Constrained ($w$=0.1) & 1.44	& 3.27 & 2.29 & 2.88 \\
Energy Constrained ($w$=0.5) & 2.68	& 5.21 & 2.35 & 2.85 \\
CORE & 5.45	& 11.39	& 2.42 & 2.69 \\
eCORE & 5.71 & 11.64 & 2.43 & 2.75 \\
CaLB & 5.71	& 15.27	& 2.43 & 2.75 \\
CEEa & 3.38	& 7.62 & 2.50 & 2.98 \\
\hline
\end{tabular}
\end{table}

Although the impact of the proposed algorithms on the spectral efficiency has been analyzed, Fig. \ref{fig:throughput} shows the downlink throughput for each algorithm. Fig. \ref{fig:throughput} also includes as baseline the algorithm  proposed in \cite{bench}, which is labelled as CS.  As CS is a scheme based on the received SNR to allow or ban cooperation (among other aspects), results for two minimum SNR thresholds have been simulated: 4.73 dB and 2.84 dB. Fig. \ref{fig:throughput} shows how CaLB outperforms the rest of algorithms, reaching a 59.5\% gain in the downlink throughput with respect to the case \textit{Without Clustering} for 140 users. As expected, it can be also observed that eCORE outperforms CORE and, in turn, CaLB outperforms eCORE. In particular, CORE achieves a throughput 36.6\% higher than \textit{Without Clustering}, whereas eCORE reaches a 47.2\% improvement and CaLB a 59.5\%. As for CEEa, the additional constraints imposed in the creation of clusters reduce the downlink achievable throughput, but still presents slightly better results than CORE.
\begin{figure}[!ht]
\centering
\includegraphics[scale=.55]{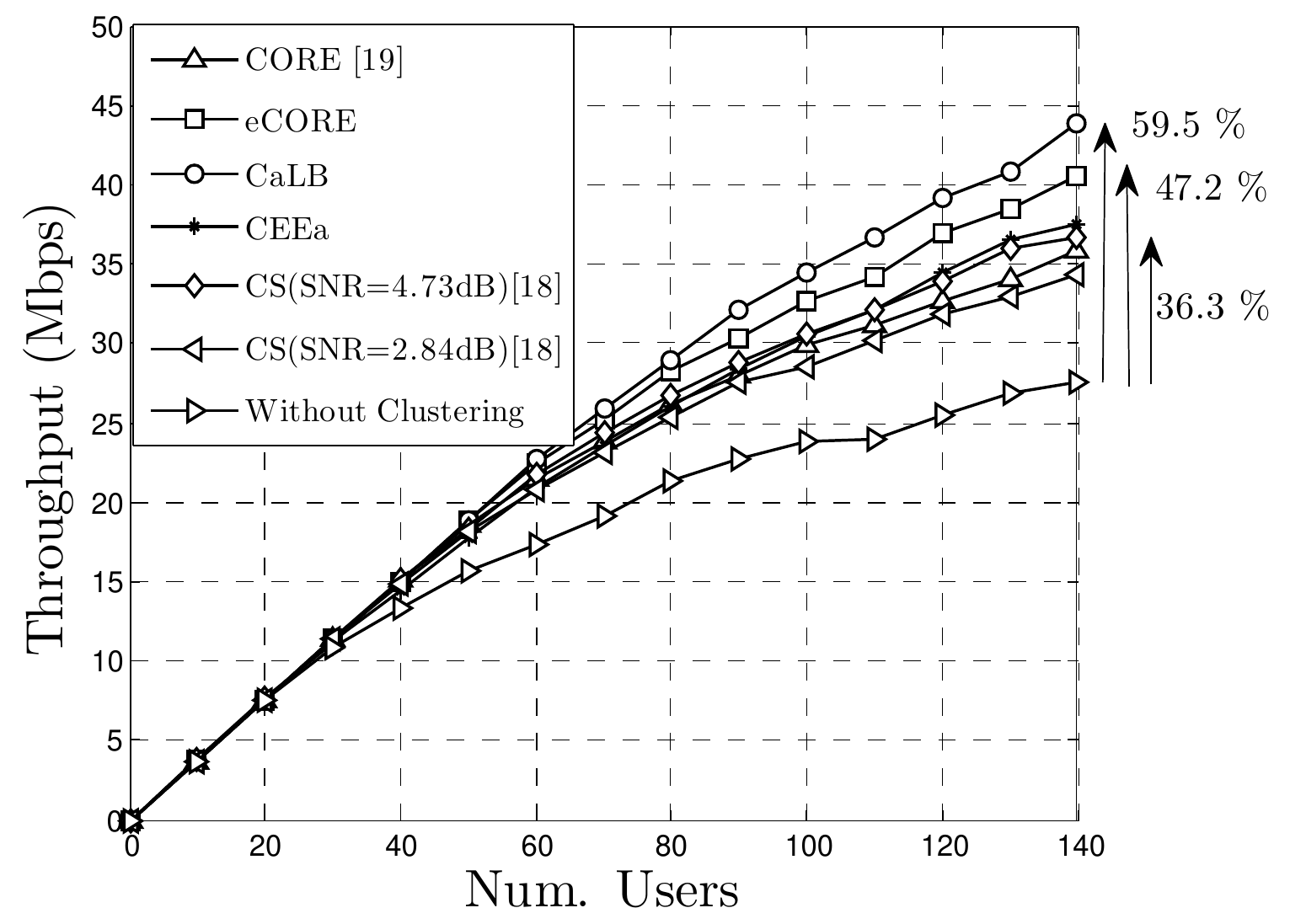} 
\caption{Downlink throughput for the set of algorithms.}
\label{fig:throughput}
\end{figure}

Focusing on how CEEa is able to limit the energy overconsumption suffered by cluster heads, Fig. \ref{fig:CDF_total} plots the Cumulative Distribution Function (CDF) of the energy overconsumption, $w$, for eCORE, CaLB and CEEa. Note that, by definition, the overconsumption is always expressed with respect to the case where no clustering algorithms are implemented. Therefore, without any clustering, the energy overconsumption would be $w=0\%$. As it can be observed in Fig. \ref{fig:CDF_total}, the energy underconsumption from which cluster members (except for the cluster head) benefit is similar in eCORE, CaLB and CEEe. However, CEEa limits the overconsumption of cluster heads. For instance 99\% of the users have an overconsumption $w< 20\%$ with CEEa; in turn, for eCORE the 99\% of users experience an overconsumption $w< 240\%$ and with CaLB the same percentage of users experience $w<260\%$. Therefore, CEEa is able to limit the overconsumption of cluster heads.

\begin{figure}[!ht]
    \centering
    \includegraphics[scale=.55]{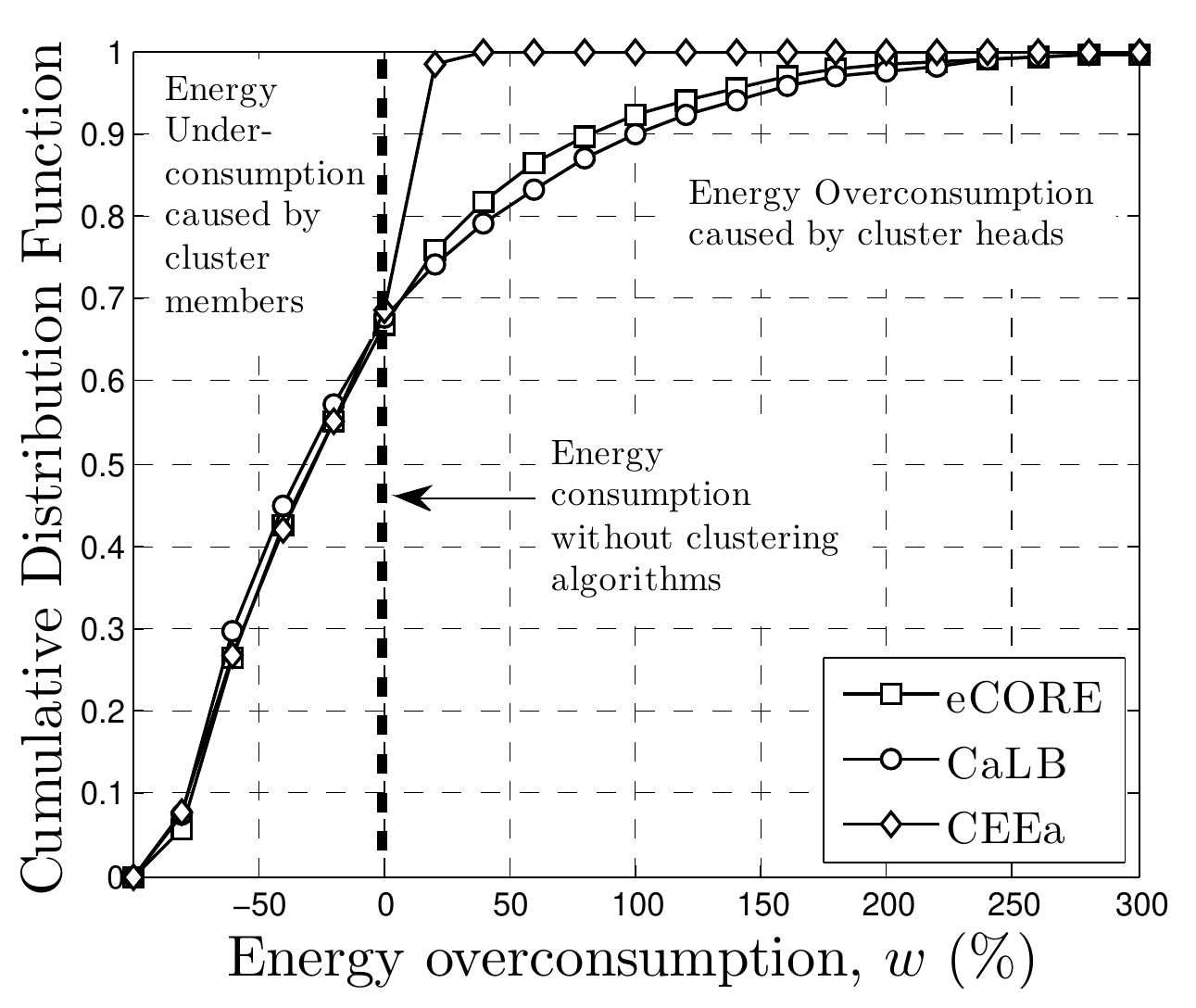} 	 
\caption{Cumulative Distribution Function (CDF) of the energy overconsumption with 60 users.}
\label{fig:CDF_total}
\end{figure}

Given the trade-off between the maximum capacity gain (achieved by CaLB) and the minimum impact on energy overconsumption (achieved by CEEa), Fig. \ref{fig:EnEff} sheds light on the energy efficiency of eCORE, CaLB and CEEa for 60 users. Cluster heads present low energy efficiency because they forward traffic to/from cluster members. Therefore, the percentage of users with low energy efficiency grows with the number of cluster heads. This can be particularly significant in CaLB and eCORE. Conversely, CEEa alleviates partially the high energy consumption of cluster heads but, simultaneously, makes the throughput decrease. In none of the cases (eCORE, CaLB and CEEa) the low energy efficiency of cluster heads is  compensated by the increased energy efficiency of the rest of cluster members. Accordingly, and in the light of the results, clustering algorithms can improve the capacity of the network but at the expense of lower energy efficiency.

\begin{figure}[!ht]
\centering
\includegraphics[scale=.55]{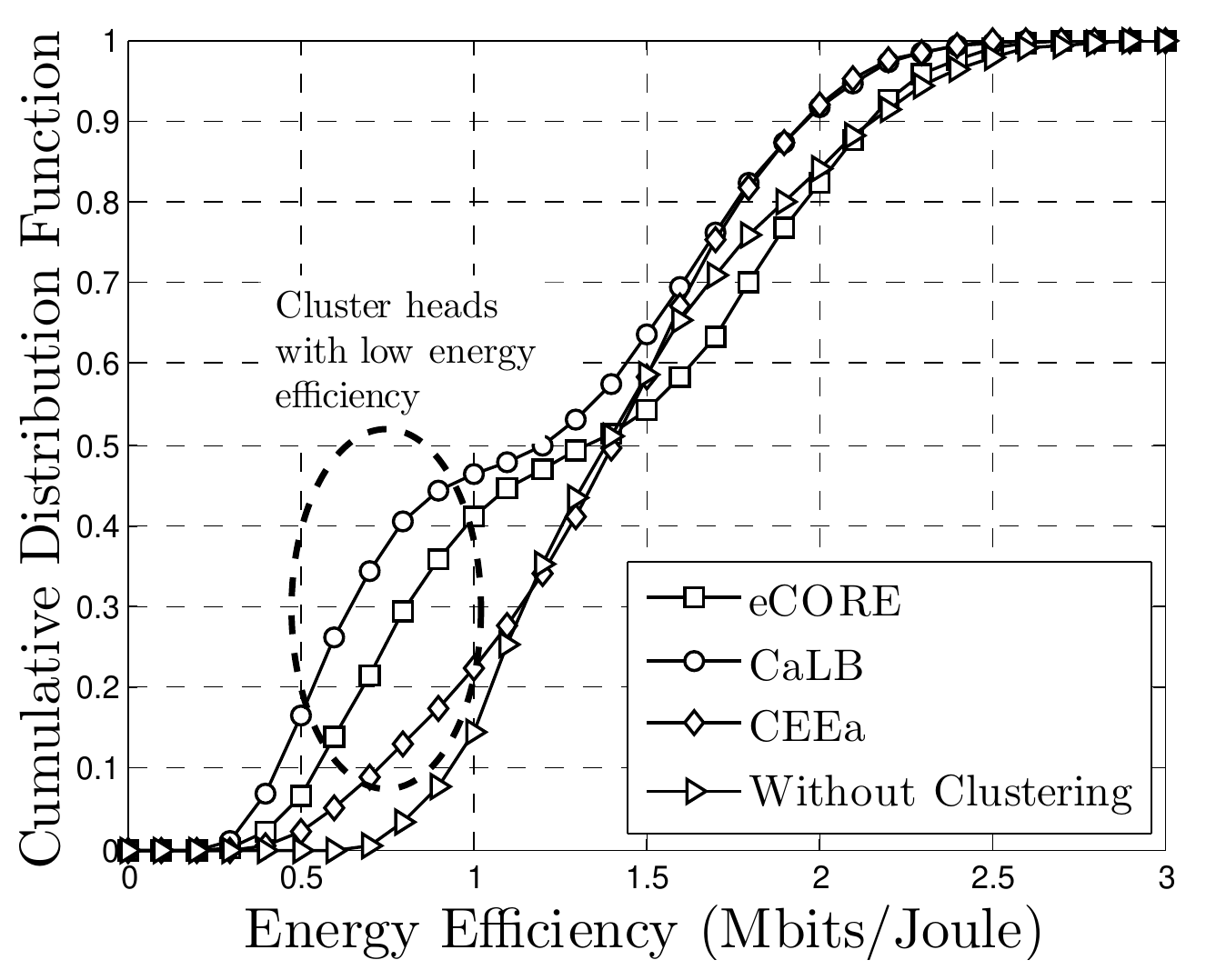} 
\caption{Cumulative Distribution Function (CDF) of the energy efficiency with 60 users.}
\label{fig:EnEff}
\end{figure}

In order to see how sensitive CaLB and CEE are to their key parameters ($n_{min}^d$ and $n_{min}^u$ for CaLB and $w$ for CEEa), simulations have been run with different values. As for CaLB, it has been observed that differences in terms of throughput are not significant and below 2\% for a wide range of values $n_{min}^d$ and $n_{min}^u$. Although the creation/enlargement of clusters will start before as the values of $n^d_{min}$ increase, it is also true that it will not be translated into a significant increase of the throughput. Therefore, CaLB is slightly sensitive to $n^d_{min}$ variations in terms of throughput as long as $n^d_{min}>0$, but should be selected small enough to avoid the creation of additional clusters when it is not actually needed (in terms of throughput)\footnote{No additional figure for the throughput  has been included due to the slight observed  differences}.	
		
Regarding CEEa, the key parameter is the maximum allowed energy overconsumption $w$. This parameter has a single objective that is attained in a two-fold manner: firstly, by preventing some users from becoming cluster heads (due to previous energy overconsumption), and secondly by forcing the release of the role of cluster head (if the energy overconsumption is too high). In a nutshell, the larger $w$ is, the more aggressive the clustering is, thus achieving similar results to the ones obtained with eCORE (where no energy consumption constraints are imposed). Conversely, small $w$ values impose additional constraints in the creation of clusters. This effect can be observed in Fig. \ref{fig:ReEff2}, where the CDF of the energy efficiency is plotted for 60 users and $w=\{ 0.2 , 0.6, 1.5 \}$.  Results for eCORE have been also included for the sake of comparison. It can be clearly observed that eCORE has cluster heads with low energy efficiency and in turn cluster members with high energy efficiency. Note that the higher $w$ is, the more closed results are to the ones of eCORE, since less constraints on energy consumption are imposed.

\begin{figure}[!ht]
\centering
\includegraphics[scale=0.55]{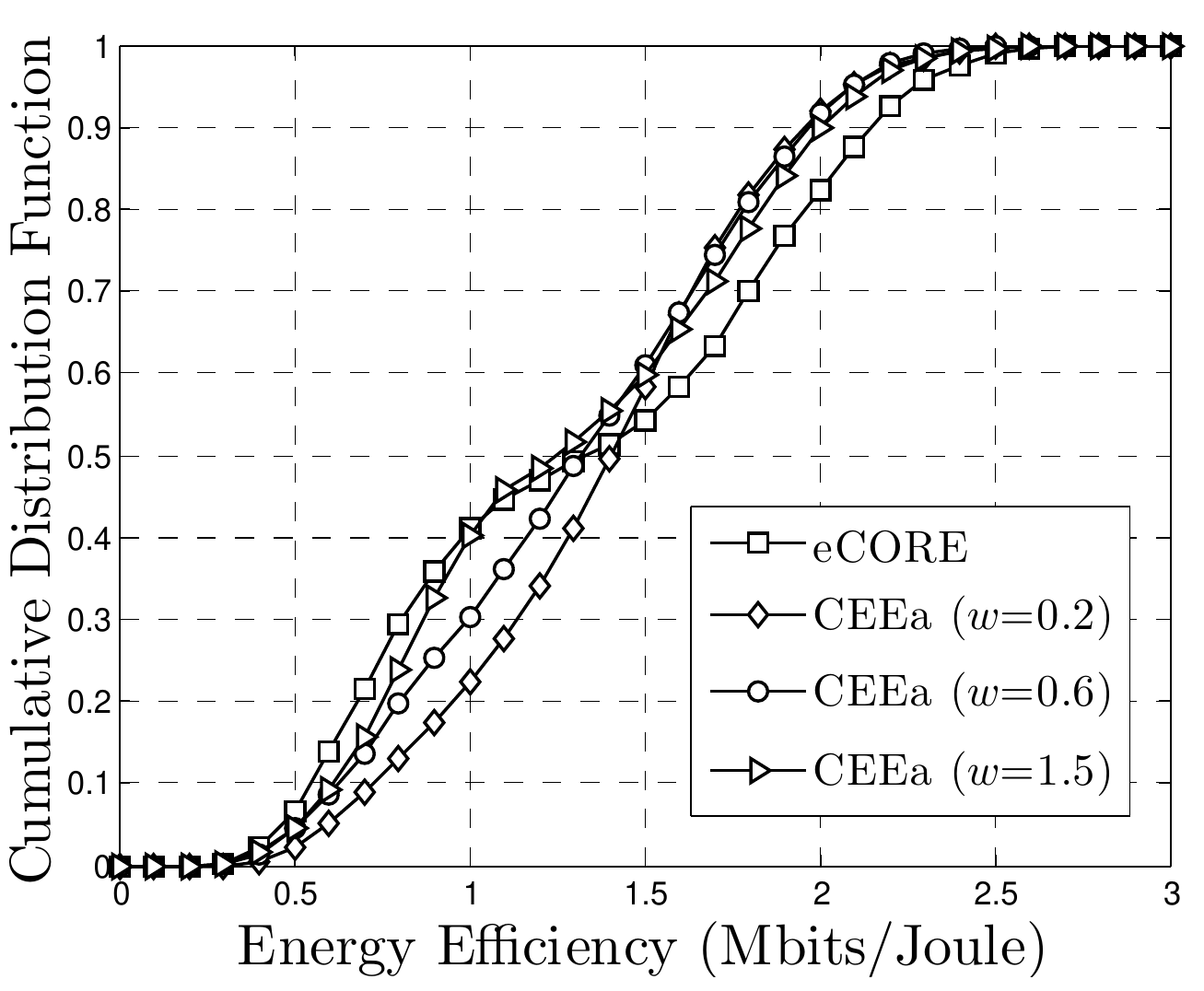} 
\caption{Cumulative Distribution Function (CDF) of the energy efficiency of CEEa with 60 users for different $w$ values.}
\label{fig:ReEff2}
\end{figure}

\subsection{Discussion on signalling}
\label{SS:signalling}

Signalling is an important aspect of D2D communications. 3GPP establishes control and data plane paths for D2D communications (termed as Proximity Services -ProSe) in \cite{22803}, and covers these aspects in more detail in \cite{23303}. The proposed algorithms are framed within the group of \textit{UE-to-Network Relay} functions \cite{23303}, since the cluster head acts as a relay from each of the cluster members to the network. In this context two important interfaces are defined: PC3, defined as the interface from the relay (i.e. the cluster head) to the network; and PC5, defined as the one-to-one or one-to-many interface between users (the so-called D2D communication). The proposed mechanisms implement the network-assisted D2D mode with the loosely-controlled scheme, in which the network allocates resources for the D2D communications, and the cluster head reallocates the resources within the cluster. Network-assisted loosely-controlled D2D communications require additional signalling, particularly over PC5 interface. However, as shown in Table \ref{t:Clusters}, the proposed algorithms improve the throughput by creating a significant number of small size clusters rather than large size clusters, thus alleviating/reducing the increase of signalling over the PC5 interface. Therefore, although eCORE, CaLB and CEEa require additional signalling, the small size of the clusters limits the additional signalling burden over PC5.

Nevertheless, frequent cluster head (re-)selection could incur excessive signalling burden. Thus, there exists a trade-off between signalling and system performance. Algorithms eCORE and CaLB do not include neither parameters to control (reduce or increase) the number of clusters nor parameters to limit the duration of the clusters. Conversely, CEEa controls indirectly the number and size of the clusters, as well as how long they remain active or with the same cluster head, with parameters $w_{max}$ and $T_{\varepsilon}$.

\section{Conclusions}
\label{S:conclusions}

This work presents a complement/alternative to the costly densification of cellular RANs based on the creation of clusters of users, where intra-cluster communications are carried out in a D2D mode. Three clustering algorithms are presented: eCORE,  CaLB and CEEa. eCORE is aimed to optimize the usage of spectral resources by establishing spectral efficient clusters. Due to the significant imbalance between uplink and downlink traffic, CaLB is an algorithm that creates non-spectral efficient clusters that, however, improve the maximum capacity of the network by reducing the aforementioned imbalance. Finally, and in order to reduce the impact of eCORE on the energy consumption of cluster heads, CEEa is proposed to keep track of the overconsumption of users and ban some users from becoming cluster heads. Results show that the proposed clustering solutions increase the capacity of the network. In particular, the most aggressive clustering algorithm (CaLB) outperforms the rest of algorithms. Yet, it has be shown that any capacity improvement is translated into an increase of the consumed energy or, in other words, a reduction of the energy efficiency. In that sense, CEEa achieves a good energy consumption performance but, simultaneously, it leads to the smallest capacity gain. 

\appendices

\section{Calculation of $\theta_{i,k}^{N}$ and $\theta_{i,j,k}^{H}$ }
\label{S:definitions_theta}

The following expressions are derived for full-duplex devices, i.e. devices that can transmit in the uplink and receive in the downlink simultaneously \cite{half_full1}. Expressions for half-duplex devices are omitted due to space limitation, but they can be easily derived. Regarding full-duplex devices, the probability of being in RRC\_CONNECTED state when the scheduler minimizes the RRC\_CONNECTED state periods can be expressed as\footnote{In these expressions all probabilities are conditioned to $\rho_i =N$ or $\rho_i =H$ respectively, but it has been omitted to simplify the notation.} $\mathbb{P}\{ s_i \in \mathcal{S}_C | \rho_i = N\} = \max \left( \mathbb{P} \{ s_i = C_{tx} \}, \mathbb{P} \{ s_i = C_{rx}  \} \right)$ and $\mathbb{P}\{ s_i \in \mathcal{S}_C | \rho_i = H\}=  \max ( \mathbb{P}  \{ s_i = C_{rx} \cap v=d \},\mathbb{P} \{ s_i = C_{tx} \} +\mathbb{P}  \{ s_i = C_{rx} \cap v = u \})$,

where $v$ stands for the band ($u$ for uplink and $d$ for downlink) over which user $i$ is receiving. We define the maximum number of PRBs  per subframe allocated in the band $v$ ($v=u$ for uplink and $v=d$ for downlink) to a user served by a BS as $M_{max}^v$. If the user is a cluster head, then the maximum number of allocated PRBs is given by $M_{max}^{C,v}$ (the cluster head must forward all the traffic generated/received within the cluster). Analogously, the maximum number of PRBs per subframe that can be allocated to a user for intra-cluster communications are denoted by $M_{max}^{H}$ (transmissions from cluster head to cluster member) and $M_{max}^{M}$ (transmissions from cluster member to cluster head). Accordingly, it can be found that
\begin{equation}
\mathbb{P} \{ s_i = C_{rx} | \rho_i = N\} \approx \frac{R_i^d \phi_{k,i}^d}{M_{max}^d} \nonumber
\end{equation}
\begin{equation}
\mathbb{P} \{ s_i = C_{tx} | \rho_i = N \} \approx \frac{R_i^u \phi_{i,k}^u}{M_{max}^u} \nonumber
\end{equation}
%and
\begin{equation}
\mathbb{P} \{ s_i = C_{rx} \cap v=d | \rho_i= H\} \approx \frac{ \phi_{k,i}^d  }{M_{max}^{C,d}}  \sum_{j \in \mathcal{C}_u } R_j^d \nonumber
\end{equation}
\begin{equation}
\mathbb{P} \{ s_i = C_{rx} \cap v=u | \rho_i= H \} \approx    \sum_{j \in \mathcal{C}_u \setminus \{ i \}  }  \frac{R_j^u \phi_{j,i}^u}{M_{max}^{M}}  \nonumber
\end{equation}

\begin{eqnarray}
\mathbb{P} \{ s_i = C_{tx} | \rho_i= H \} &\approx& \frac{R_i^u \phi_{i,k}^u}{M_{max}^{C,u}} +  \nonumber \\
&+& \sum_{j \in \mathcal{C}_u \setminus \{  i  \} } \left(  \frac{R_j^u \phi_{i,k}^u}{M_{max}^{C,u}} + \frac{R_j^d \phi_{i,j}^u}{M_{max}^{H}} \right) \nonumber
\end{eqnarray}

Based on this, $\theta_{i,k}^N$ can be expressed as $\theta_{i,k}^{N} =\max \left\{  \frac{R_i^d \phi_{k,i}^d}{M_{max}^d} , \frac{R_i^u \phi_{i,k}^u}{M_{max}^u}  \right\}$. The expression of $\theta_{i,j,k}^{H}$ can be derived analogously.

\section{Estimate of $\tilde{E}_i^N (t)$}
\label{S:EiN}

Following the notation used in Appendix \ref{S:definitions_theta} and according to \eqref{E_Ei} and \eqref{E_Ptxi}, the estimate of $\tilde{E}_i^N (T_{i,n})$ involves two components: the time during which user $i$ would remain in RRC\_IDLE state or in RRC\_CONNECTED state if it was not clustered, and the transmitted power. If full-duplex devices are assumed, the RRC\_CONNECTED time can be expressed as $T_i^{C} \approx T^s \cdot \max \left\{ \left\lceil   \frac{\eta^d_{tot} (T_{i,n})}{M_{max}^d \eta^d_{i,k}} \right\rceil ,  \left\lceil   \frac{\eta^u_{tot} (T_{i,n})}{M_{max}^u \eta^u_{i,k}} \right\rceil  \right\}$, 
where $\eta^v_{tot} (T_{i,n})$ is the total number of bits of user $i$ transmitted during period $T_{i,n}$ ($v=u$ for uplink traffic and $v=d$ for downlink traffic), $\eta^u_{i,k}$ would be the TBS if user $i$ was served by BS $k$, and $M_{max}^v$ is the maximum number of PRBs allocated to a user in a single subframe. Based on \eqref{E_Ei} and the expression for $T_i^{C}$, $\tilde{E}_i^N (T_{i,n}) \approx T_i^C P_C + T_i^I P_I + T^s P_0 h_{i,k}^{-\xi} \left\lceil   \frac{\eta^u_{tot} (T_{i,n})}{M_{max}^u \eta^u_{i,k}} \right\rceil$, 
where $P_{I}$ and $P_{C}$ are the power consumed in state $s_i = I$ and $s_i \in \mathcal{S}_C$ respectively, and $T_i^I =  \sum_{T_{i,n}} (t_{i,n}^1-t_{i,n}^0) - T_i^C$.

\ifCLASSOPTIONcaptionsoff
  \newpage
\fi

%\begin{IEEEbiography}{Author}
%Biography text here.
%\end{IEEEbiography}

%It is not necessary to upload the biography when you submit your manuscript.

\end{document}